\documentclass[a4,12pt]{article}
\usepackage{graphicx,fancyhdr,mathrsfs,bbm}
\usepackage{amsfonts}
\usepackage{amsmath}
\usepackage{mathtools}
\usepackage{multicol}
\usepackage{mathrsfs}
\usepackage{multirow}
\usepackage{amssymb}
\usepackage{url}
\usepackage[colorlinks=true,citecolor=blue,urlcolor=blue]{hyperref}
\usepackage{fancyhdr}
\usepackage{indentfirst}
\usepackage{amsthm}
\usepackage{color}
\usepackage{comment}
\usepackage{dsfont}
\usepackage{natbib}
\usepackage[misc]{ifsym}
\usepackage[toc,page]{appendix}
\usepackage{subfig}
\usepackage[top=20mm, bottom=30mm, left=20mm, right=20mm]{geometry}
\usepackage{enumitem}
\setlist{
  align=left,
  labelindent=0mm,
  leftmargin=!,
  itemindent=0mm,
  listparindent=\parindent,
  parsep=0mm,
  topsep=1mm,
  itemsep=1mm
}
\setlist[itemize,1]{label={\mysquare}, labelwidth=\widthof{\mysquare\ }}
\setlist[itemize,2]{label={\mytriangle}, labelwidth=\widthof{\mytriangle\ }}
\setlist[itemize,3]{label={\mybar}, labelwidth=\widthof{\mybar\ }}
\setlist[itemize,4]{label={\mydot}, labelwidth=\widthof{\mydot\ }}
\setlist[enumerate,1]{label=(\arabic*), labelwidth=\widthof{(9)}}
\setlist[enumerate,2]{label=(\arabic{enumi}.\arabic*), labelwidth=\widthof{(9.9)}}
\setlist[enumerate,3]{label=(\arabic{enumi}.\arabic{enumii}.\arabic*), labelwidth=\widthof{(9.9.9)}}
\setlist[enumerate,4]{label=(\arabic{enumi}.\arabic{enumii}.\arabic{enumiii}.\arabic*), labelwidth=\widthof{(9.9.9.9)}}
\usepackage{actuarialsymbol}
\usepackage{actuarialangle}

\newcommand{\R}{\mathbb{R}}

\newcommand{\id}{\mathds{1}}

\renewcommand{\ge}{\geqslant}
\renewcommand{\le}{\leqslant}

\renewcommand{\epsilon}{\varepsilon}

\usepackage{array}
\newcommand{\PreserveBackslash}[1]{\let\temp=\\#1\let\\=\temp}
\newcolumntype{C}[1]{>{\PreserveBackslash\centering}p{#1}}
\newcolumntype{R}[1]{>{\PreserveBackslash\raggedleft}p{#1}}
\newcolumntype{L}[1]{>{\PreserveBackslash\raggedright}p{#1}}

\usepackage{csquotes}
\renewcommand{\mkbegdispquote}[2]{\itshape}

\usepackage{booktabs,array}

\newcount\rowc

\makeatletter
\def\ttabular{%
  \hbox\bgroup
  \let\\\cr
  \def\rulea{\ifnum\rowc=\@ne \hrule height 1.3pt \fi}
  \def\ruleb{
    \ifnum\rowc=1\hrule height 1.3pt \else
      \ifnum\rowc=6\hrule height \heavyrulewidth
      \else \hrule height \lightrulewidth\fi\fi}
  \valign\bgroup
  \global\rowc\@ne
  \rulea
  \hbox to 10em{\strut \hfill##\hfill}%
  \ruleb
  &&%
  \global\advance\rowc\@ne
  \hbox to 10em{\strut\hfill##\hfill}%
  \ruleb
  \cr}
\def\endttabular{%
  \crcr\egroup\egroup}

\newcommand{\rd}{\mathrm{d}}

\theoremstyle{definition}
\newtheorem{Theorem}{Theorem}

\newtheorem{Lemma}{Lemma}
\newtheorem{Proposition}{Proposition}
\newtheorem{Definition}{Definition}

\newtheorem{Remark}{Remark}

\usepackage{tikz}

\usepackage[compact]{titlesec}

\usepackage{color,xcolor,comment}
\usepackage{bm}

\renewcommand*{\P}{\mathbb{P}}

\usepackage{algpseudocode}
\usepackage{algorithm,algcompatible} 

\algnewcommand\algorithmicinput{\textbf{Input:}}
\algnewcommand\INPUT{\item[\algorithmicinput]}

\algnewcommand\algorithmicoutput{\textbf{Output:}}
\algnewcommand\OUTPUT{\item[\algorithmicoutput]}

\makeatletter
\DeclareRobustCommand{\bsquare}{%
  \mathop{\vphantom{\sum}\mathpalette\bigstar@\relax}\slimits@
}

\newcommand{\bigstar@}[2]{%
  \vcenter{%
    \sbox\z@{$#1\sum$}%
    \hbox{\resizebox{.9\dimexpr\ht\z@+\dp\z@}{!}{$\m@th\dsquare$}}%
  }%
}
\makeatother

\usepackage[onehalfspacing]{setspace}

\begin{document}

\title{
  Tail copula representation of path-based maximal tail dependence
}

\author{
  Takaaki Koike\thanks{\protect\linespread{1}\protect\selectfont Corresponding author.}\,\,\thanks{\protect\linespread{1}\protect\selectfont
    Graduate School of Economics, Hitotsubashi University.
    Email: \texttt{takaaki.koike@r.hit-u.ac.jp}},
  Marius Hofert\thanks{\protect\linespread{1}\protect\selectfont
    Department of Statistics and Actuarial Science, The University of Hong Kong.}\, and
  Haruki Tsunekawa\thanks{\protect\linespread{1}\protect\selectfont
    Graduate School of Economics, Hitotsubashi University.
    }
}

\maketitle

\begin{abstract}
  The classical tail dependence coefficient (TDC) may fail to capture
  non-exchangeable features of tail dependence due to its restrictive focus on
  the diagonal of the underlying copula. To address this limitation, the
  framework of path-based maximal tail dependence has been proposed, where a path of
  maximal dependence is derived to capture the most pronounced feature of
  dependence over all possible paths, and the path-based maximal TDC serves as a
  natural analogue of the classical TDC along this path. However, the
  theoretical foundations of path-based tail analyses, in particular the
  existence and analytical tractability, have remained limited. This paper
  addresses this issue in several ways. First, we prove the existence of a path
  of maximal dependence and the path-based maximal TDC when the underlying
  copula admits a non-degenerate tail copula.  Second, we obtain an explicit characterization of the maximal TDC in terms of the
  tail copula.
  Third, we show that  the first-order asymptotics of a path of maximal dependence is characterized by a one-dimensional optimization involving the tail
  copula.
   These results improve the analytical and
  computational tractability of path-based tail analyses.  As an application, we
  derive the asymptotic behavior of a path of maximal dependence for the
  bivariate $t$-copula and the survival Marshall--Olkin copula.
\end{abstract}

\noindent \emph{MSC classification:}
60E05, 
62G32, 
62H10, 
62H20. 
\\
\noindent \emph{Keywords:}
Copula;
extreme-value copula;
tail copula;
tail dependence;
tail non-exchangeability.

\section{Introduction}\label{sec:intro}
Copulas are a popular tool for modeling stochastic dependence, in particular
extremal dependence in the joint tails of the distribution of interest in
insurance and risk management applications; see~\citet{nelsen2006introduction},
\citet{jaworskidurantehaerdlerychlik2010} or
\citet[Chapter~7]{mcneil2015quantitative}. For notational convenience, we focus
on the lower tail in this work.
Results for other tails can be obtained by
suitable rotations or reflections, see
\citet[Section~3.4.1]{hofertkojadinovicmaechleryan2018}. Let $C$ be a copula and
$(U,V)\sim C$.  If it exists, the (lower) \emph{tail dependence coefficient
  (TDC)} of \citet{sibuya1960bivariate} is given by
\begin{align*}
  \lambda(C)=\lim_{u\downarrow 0}\frac{\mathbb{P}(U\le u,V\le u)}{u}=\lim_{u\downarrow 0}\frac{C(u,u)}{u}.
\end{align*}
The TDC is widely used to quantify the degree of dependence in the joint tail of
bivariate copulas, and, reminiscent of the notion of correlation, via matrices
of pairwise TDCs in the multivariate case \citep{embrechtshofertwang2016}.
However, since the definition of $\lambda(C)$ is based solely on the diagonal
$(u,u)_{u\in(0,1]}$ of the underlying copula $C$, the TDC is known to
potentially overlook off-diagonal features of tail dependence.

To capture off-diagonal tail dependence, various measures have been proposed
in the literature. In terms of copulas, the tail copula~\citep{schmidt2006non},
tail dependence function~\citep{joe2010tail} and tail order
function~\citep{hua2011tail} describe tail dependence in functional form.
Measures to numerically quantify the degree of off-diagonal tail dependence
have been proposed, for example, by~\citet{krupskii2015tail},
\citet{lee2018tail}, \citet{hua2019assessing} and \citet{siburg2024comparing}.

\citet{furman2015paths} proposed a path-based analysis to capture
off-diagonal bivariate tail dependence.
They introduced a \emph{path of maximal dependence}, denoted by
$(\varphi^\ast(u),\psi^\ast(u))_{u \in (0,1]}\subseteq[0,1]^2$.  For each $u$,
the point $(\varphi^\ast(u),\psi^\ast(u))$ maximizes the joint probability
$\mathbb{P}((U,V) \in [0,\varphi(u)]\times [0,\psi(u)])$ over all
\emph{admissible paths} $(\varphi(u),\psi(u))_{u \in (0,1]}\subseteq[0,1]^2$,
which satisfy $\lim_{u\downarrow 0}\varphi(u)=\lim_{u\downarrow 0}\psi(u)=0$, $\varphi(1)=\psi(1)=1$ and, for
every $u \in (0,1]$, that the rectangle $[0,\varphi(u)]\times [0,\psi(u)]$ has the
same area as the square $[0,u]^2$, namely, $u^2$; see
Definition~\ref{def:path:max:index} for a formal definition. The area condition
implies that $\psi(u)=u^2/\varphi(u)$.  Consequently, the path
$(\varphi^\ast(u),u^2/\varphi^\ast(u))_{u \in (0,1]}$ and associated indices
capture the most pronounced feature of dependence, which may be overlooked if
only the diagonal path $(u,u)_{u \in (0,1]}$ is considered.

The natural path-based extension of the TDC is thus the \emph{path-based maximal
  TDC} defined by
\begin{align*}
  \lambda_{\varphi^\ast}(C)=\lim_{u\downarrow 0}\frac{\mathbb{P}(U\le \varphi^\ast(u),V\le u^2/\varphi^\ast(u))}{u}=\lim_{u\downarrow 0}\frac{C(\varphi^\ast(u), u^2/\varphi^\ast(u))}{u}.
\end{align*}
Since~\citet{furman2015paths} define this coefficient under the assumption that
a path of maximal dependence
$(\varphi^\ast(u),u^2/\varphi^\ast(u))_{u \in (0,1]}$ exists, we denote it by
$\lambda_{\varphi^\ast}(C)$ instead of their original notation
$\lambda_{\operatorname{L}}^\ast$ to emphasize the underlying function $\varphi^\ast$.
Note, however, that
the quantity $\lambda_{\varphi^\ast}(C)$ is invariant under the choice of a path
of maximal dependence if such a path is not unique;
see~\citet[Section~2]{furman2015paths}.

Despite the natural intuition behind this coefficient, closed-form expressions for
$\lambda_{\varphi^\ast}(C)$ are rarely found in the literature, which can be
attributed to the difficulty of finding the function $\varphi^\ast$; see
\citet[Section~6]{furman2015paths}. Exceptions include the symmetric bivariate Gaussian
copula with positive correlation parameter and a subclass of bivariate Archimedean
copulas, for which one can show that the path of maximal dependence is the diagonal; see
\citet[Section~6.2]{furman2015paths} and \citet{furman2016tail}.
An example of non-exchangeable copulas with non-diagonal paths of maximal
dependence is the Marshall--Olkin (MO) copula family of
\citet{marshall1967generalized, marshall1967multivariate};
see~\citet[Section~4]{furman2015paths}.

To address these limitations in the framework of path-based maximal tail dependence, our main contribution is to reveal the close connection between
$\varphi^\ast$ and $\lambda_{\varphi^\ast}$ to the \emph{tail copula}~\citep{schmidt2006non}
\begin{align*}
  \Lambda(x,y;C)=\lim_{t\downarrow 0}\frac{C(tx,ty)}{t},\quad (x,y)\in (0,\infty)^2,
\end{align*}
and to the \emph{maximal tail concordance measure}~\citep[\emph{MTCM},][]{koike2023measuring}
\begin{align*}
  \lambda^\ast(C)=\sup_{b \in (0,\infty)}\Lambda\left(b,\frac{1}{b};C\right).
\end{align*}
If $b\mapsto \Lambda(b,1/b;C)$  has a unique maximizer in $(0,\infty)$, we denote it by $b^\ast$.
We establish the following results for any copula $C$ admitting a non-degenerate tail copula $\Lambda$:
\begin{enumerate}[label=(\roman*), labelwidth=\widthof{(iii)}]
\item\label{contr:i} a function of maximal dependence $\varphi^\ast$ and the maximal TDC $\lambda_{\varphi^\ast}(C)$ exist;
\item the two indices $\lambda_{\varphi^\ast}(C)$ and $\lambda^\ast(C)$ coincide; and
\item\label{contr:iii} if the unique maximizer $b^\ast$ exists, then $\varphi^\ast(u)$ is asymptotically equal to $b^\ast\,u$ as $u \downarrow 0$.
\end{enumerate}
These findings overcome key limitations encountered in the path-based analyses
of tail dependence by eliminating the need to verify the existence of
$\varphi^\ast$ and by reducing the analysis of $\varphi^\ast$ and its
first-order asymptotic behavior to a one-dimensional
optimization problem based on the tail copula $\Lambda$.

Note that closed-form expressions for the MTCM and $b^\ast$ are known for
various non-exchangeable copulas; see~\citet{koike2023measuring}
and~\citet{hofert2025w}.  As an application of our theoretical results, we
derive the asymptotic behavior of the path of maximal dependence for the bivariate
$t$-copula and the survival Marshall--Olkin copula.  For the bivariate $t$-copula, we show
that the diagonal is asymptotically the unique path of maximal dependence, and
hence, the maximal TDC coincides with the standard TDC.  Our proof is based on
the spectral representation of the tail copula of the bivariate $t$-copula,
which is derived from the corresponding $t$-\emph{extreme-value (EV) copula};
see~\citet{demarta2005t} and \citet{nikoloulopoulos2009extreme}.  For the survival Marshall--Olkin
copula, we show that any path of maximal dependence asymptotically coincides
with its singular curve.

The paper is organized as follows. After formally introducing the path-based
framework of maximal tail dependence and the MTCM in Section~\ref{sec:preliminaries},
we present the aforementioned main results in Section~\ref{sec:equivalence},
accompanied by analytical examples and
simulations. Section~\ref{sec:applications} contains the aforementioned application of our
theoretical findings to $t$-copulas and survival Marshall--Olkin copulas.
Section~\ref{sec:concl} provides a conclusion. All proofs are deferred to the
appendix.

\section{Preliminaries}\label{sec:preliminaries}
We start by introducing the concept of path-based maximal
dependence of~\citet{furman2015paths} for measuring off-diagonal tail
dependence.

\begin{Definition}[Path-based maximal tail dependence]\label{def:path:max:index}
  Let $C$ be a bivariate copula.
  \begin{enumerate}[label=(\arabic*)]
  \item A measurable function $\varphi:(0,1]\to [0,1]$ is called
    \emph{admissible} if
    \begin{enumerate}[label=(\roman*), labelwidth=\widthof{(iii)}]
    \item $\varphi(u) \in [u^2,1]$ for every $u\in(0,1]$; and
    \item $\lim_{u\downarrow 0}\varphi(u)=\lim_{u\downarrow 0}\left(u^2/\varphi(u)\right)=0$.
    \end{enumerate}
    Denote by $\mathcal A$ the set of all admissible functions.
  \item For $\varphi\in\mathcal A$, let
    \begin{align*}
      \Pi_\varphi(u) := C\left(\varphi(u), \frac{u^2}{\varphi(u)}\right),
      \qquad u\in(0,1].
    \end{align*}
    A function $\varphi^\ast\in \mathcal A$ is called a \emph{function of maximal dependence}
    if
    \begin{align*}
      \Pi_{\varphi^\ast}(u)=\max_{\varphi\in\mathcal{A}} \Pi_\varphi(u)
      \quad\text{for all $u\in(0,1]$}.
    \end{align*}
  \item If $C$ admits a function of maximal dependence $\varphi^\ast\in\mathcal A$,
    the \emph{path-based maximal tail dependence coefficient} is defined by
    \begin{align*}
      \lambda_{\varphi^\ast}(C) = \lim_{u\downarrow 0} \frac{\Pi_{\varphi^\ast}(u)}{u},
    \end{align*}
    provided the limit exists.
  \end{enumerate}
\end{Definition}

According to \citet[Theorem~2.3]{furman2015paths}, if $C$ has a unique function
of maximal dependence $\varphi^\ast$, then $\varphi^\ast$ is
continuous. Moreover, 
even if $C$ admits multiple functions of maximal dependence, the value of the measure
$\lambda_{\varphi^\ast}(C)$ does not depend on the specific choice of
$\varphi^\ast$.

For an admissible function $\varphi\in\mathcal A$, the path
$\left(\varphi(u), u^2/\varphi(u)\right)_{u\in(0,1]}$ approaches $(0,0)$ while
the area of the rectangle $[0,\varphi(u)]\times[0,u^2/\varphi(u)]$ is $u^2$,
independently of $\varphi$.  By definition, for a function of maximal dependence
$\varphi^\ast$, the $C$-volume of $[0,\varphi(u)]\times[0,u^2/\varphi(u)]$,
equivalently $\P(U\le\varphi(u), V\le u^2/\varphi(u))$, is maximal
among all admissible choices.  Clearly, if $\varphi^\ast$ is the identity, then
$\lambda_{\varphi^\ast}$ coincides with the TDC $\lambda$.

Next, we introduce a measure of off-diagonal tail dependence introduced
by~\citet{koike2023measuring}. To this end, the \emph{tail copula}
$\Lambda: (0,\infty)^2\rightarrow [0,\infty)$ of a bivariate copula $C$ is
defined by
\begin{align*}
  \Lambda(x,y;C)=\lim_{t \downarrow 0}\frac{C(tx,ty)}{t},\quad (x,y)\in (0,\infty)^2,
\end{align*}
provided the limit exists; see~\citet{schmidt2006non} for basic properties. Note that $\Lambda(1,1)$ corresponds to the TDC.
If $\Lambda$ is not identically $0$, then we say that $\Lambda$
is \emph{non-degenerate}, otherwise \emph{degenerate}.

\begin{Definition}[Maximal tail concordance measure]
  Let $C$ be a bivariate copula admitting a non-degenerate tail copula $\Lambda$.
  We call the
  function $b\mapsto \Lambda(b,1/b;C)$ on $(0,\infty)$ the \emph{profile tail copula}.
  The \emph{maximal tail concordance measure (MTCM)}
  is then defined by
  \begin{align}\label{eq:def:mtcm}
    \lambda^\ast(C) = \sup_{b\in (0,\infty)} \Lambda\left(b,\frac{1}{b};C\right).
  \end{align}
  The unique maximizer $b\in(0,\infty)$ of the profile tail
  copula, if it exists, is denoted by $b^{\ast}=b^{\ast}(C)$.
\end{Definition}

The MTCM quantifies the maximal possible tail probability
\begin{align*}
  \lim_{t \downarrow 0}\frac{C(tb,t/b)}{t}=\lim_{t \downarrow 0}\frac{\mathbb{P}((U,V)/t \in[0,b]\times[0,1/b])}{t}
\end{align*}
over all possible rectangles $[0,b]\times[0,1/b]$, $b\in(0,\infty)$, with unit
area. Basic properties of the MTCM can be found
in~\citet[Proposition~3.7~(1)]{koike2023measuring}.  In particular, the
supremum in~\eqref{eq:def:mtcm} is always attainable and we can thus write
\begin{align*}
  \lambda^\ast(C)=\max_{b \in (0,\infty)}\Lambda\left(b,\frac{1}{b};C\right);
\end{align*}
see~\citet[Remark~3.10]{koike2023measuring}.

\section{Equivalence between the two measures}\label{sec:equivalence}
This section provides the main contributions~\ref{contr:i}--\ref{contr:iii} stated in Section~\ref{sec:intro},
as well as a numerical illustration.

\subsection{Equivalence result}
Assuming the existence of the non-degenerate tail copula, the following theorem
implies that a copula admits $\varphi^\ast$ and $\lambda_{\varphi^\ast}$, and
that the one-dimensional optimization problem underlying the MTCM completely
determines the asymptotic behavior of $\varphi^\ast$ and thus of
$\lambda_{\varphi^\ast}(C)$.

\begin{Theorem}[Equivalence between $\lambda_{\varphi^\ast}$ and $\lambda^\ast$]\label{thm:equivalence}
  Let $C$ be a bivariate copula admitting a non-degenerate tail copula $\Lambda$.
  Then the following statements hold.
  \begin{enumerate}[label=(\roman*), labelwidth=\widthof{(iii)}]
  \item\label{item:varphi:ast:existence} The copula $C$ admits a function of maximal dependence $\varphi^\ast\in\mathcal{A}$.
  \item The path-based maximal TDC $\lambda_{\varphi^\ast}(C)$ exists and satisfies $\lambda_{\varphi^\ast}(C)=\lambda^\ast(C)$.
  \item\label{item:varphi:ast:asymptotic} If, additionally, the supremum of the
    profile tail copula is uniquely attained at $b^\ast\in(0,\infty)$, then any
    function of maximal dependence $\varphi^\ast$ satisfies
  \begin{align*}
    \lim_{u\downarrow 0}\frac{\varphi^\ast(u)}{u}=b^\ast.
  \end{align*}
  \end{enumerate}
\end{Theorem}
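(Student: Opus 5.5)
We will work throughout with the reparametrization $\varphi(u)=b\,u$, $u^2/\varphi(u)=u/b$, so that $\Pi_\varphi(u)=C(bu,u/b)$ with $b=b(u)=\varphi(u)/u\in[u,1/u]$. For part~(i), fix $u\in(0,1]$ and set $g_u(b)=C(bu,u/b)$ on the compact interval $[u,1/u]$. Since $C$ is continuous (indeed $2$-Lipschitz), $g_u$ is continuous, so a maximizer exists. A measurable selection is obtained by taking, for instance, the smallest maximizer $\varphi^\ast(u)=u\,\min\operatorname{argmax}g_u$; this is measurable because $(u,b)\mapsto g_u(b)$ is jointly continuous, or alternatively by a standard measurable-selection argument. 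Any admissible $\varphi$ has $\Pi_\varphi(u)=g_u(\varphi(u)/u)\le\max g_u$, so $\varphi^\ast$ dominates pointwise. It remains to check admissibility, namely $\varphi^\ast(u)\to0$ and $u^2/\varphi^\ast(u)\to0$, which is equivalent to $u\le b(u)\ll 1/u$ and $b(u)\gg u$ at the rate needed. Since $C(bu,u/b)\le\min(bu,u/b)$, if $\varphi^\ast(u)\ge\epsilon$ along a sequence then $\Pi_{\varphi^\ast}(u)\le u^2/\epsilon=o(u)$. By contrast, taking $b$ fixed with $\Lambda(b,1/b)>0$ (possible by non-degeneracy, since $\Lambda(b,1/b)\ge$ a positive multiple of $\Lambda(x,y)$ via homogeneity and monotonicity) gives $\max g_u\ge c\,u$ for small $u$, a contradiction. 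The case of the other coordinate is symmetric. The admissibility requirement $\varphi(1)$ is not imposed in Definition~\ref{def:path:max:index}, so nothing more is needed.

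For part~(ii), the lower bound is immediate: for every fixed $b$, $\Pi_{\varphi^\ast}(u)/u\ge C(bu,u/b)/u\to\Lambda(b,1/b)$, so $\liminf\ge\lambda^\ast(C)$. For the upper bound we need uniformity in $b$. Write $\Pi_{\varphi^\ast}(u)/u=C(b_uu,u/b_u)/u$. By the bound $C(bu,u/b)/u\le\min(b,1/b)$, any $b_u$ outside $[\delta,1/\delta]$ yields a value $\le\delta$, which can be made smaller than $\lambda^\ast(C)>0$. On the compact set $[\delta,1/\delta]$ we use that $C(tx,ty)/t\to\Lambda(x,y)$ locally uniformly. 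This is a known property of tail copulas: the functions are $1$-Lipschitz in each argument, uniformly in $t$, since $|C(a,b)-C(a',b')|\le|a-a'|+|b-b'|$; pointwise convergence of equi-Lipschitz functions is uniform on compacts. Hence $\limsup_u\sup_{b\in[\delta,1/\delta]}C(bu,u/b)/u\le\sup_b\Lambda(b,1/b)=\lambda^\ast(C)$, and the limit exists and equals $\lambda^\ast(C)$.

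For part~(iii), let $b_u=\varphi^\ast(u)/u$. By the argument above, $b_u$ eventually lies in $[\delta,1/\delta]$ for $\delta<\lambda^\ast(C)$. Take any subsequential limit $\bar b$ of $b_u$. Uniform convergence together with continuity of $\Lambda$ gives $C(b_uu,u/b_u)/u\to\Lambda(\bar b,1/\bar b)$, while by (ii) this equals $\lambda^\ast(C)$. Uniqueness of the maximizer then forces $\bar b=b^\ast$, so $b_u\to b^\ast$. The main obstacle is the uniformity step in (ii); the equi-Lipschitz argument should handle it cleanly. The only other delicate point is the measurability of the selection in (i).
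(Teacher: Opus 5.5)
Your proposal is correct and follows essentially the same route as the paper. In (i) you take a measurable argmax selector and make it admissible via the Fr\'echet--Hoeffding bound together with some $b_0$ having $\Lambda(b_0,1/b_0)>0$. In (ii) you confine $b$ to a compact $[\delta,1/\delta]$ and use local uniform convergence there, and in (iii) you conclude by compactness.

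The differences are only in technical execution. You select the smallest maximizer instead of invoking Kuratowski--Ryll-Nardzewski; this is measurable because Berge's theorem makes the argmax correspondence upper hemicontinuous, so its minimum is lower semicontinuous. You prove the local uniform convergence via equi-Lipschitz continuity rather than citing \citet{schmidt2006non}. In (iii) you use subsequential limits instead of the paper's explicit $\varepsilon$--$\eta$ contradiction.
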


Theorem~\ref{thm:equivalence} has various practical implications.  First, it
guarantees the existence of $\varphi^\ast$ and $\lambda_{\varphi^\ast}$ whenever
the tail copula $\Lambda$ of $C$ is non-degenerate. Second, it simplifies
path-based tail dependence analyses by allowing one to verify the existence of
the MTCM and the uniqueness of $b^\ast$, which is typically more straightforward
than directly deriving the function of maximal dependence. Consequently, if tail
behavior is of primary interest, it suffices to focus on finding $b^\ast$ since
it completely determines the asymptotic behavior of $\varphi^\ast$ and thus
$\lambda_{\varphi^\ast}$.

We discuss the possibility of extending Theorem~\ref{thm:equivalence} in the following remarks.

\begin{Remark}[Non-existence of $\varphi^\ast$]
  Not every copula possesses a function of maximal dependence
  $\varphi^\ast\in \mathcal A$.  To see this, consider the
  \emph{Farlie--Gumbel--Morgenstern (FGM) copula} with parameter $\theta = -1$,
  given by $C(u,v) = uv(1 - (1-u)(1-v))$, $(u,v)\in[0,1]^2$. Its tail copula is
  degenerate, so Theorem~\ref{thm:equivalence} does not apply.  It is
  straightforward to check that, for each $u\in (0,1)$, the maximum of
  $x \to C(x,u^2/x)$ on $[u^2,1]$ is attained at the two points $x=u^2$ and
  $x=1$.  Therefore, if $C$ admits a function of maximal dependence
  $\varphi^\ast\in\mathcal A$, then it has to satisfy
  $\varphi^\ast(u)\in\{u^2,1\}$ for each $u \in (0,1)$.  However, since
  $(\varphi^\ast(u), u^2/\varphi^\ast(u)) \in \left\{ (u^2, 1), (1, u^2)
  \right\}$ for every $u \in (0,1)$, the pair
  $(\varphi^\ast(u), u^2/\varphi^\ast(u))$ cannot converge to $(0,0)$.
  Therefore, $\varphi^\ast$ is not admissible.
\end{Remark}

\begin{Remark}[Uniqueness assumption of $b^\ast$]
  Let
  $\mathcal B^\ast =\mathcal B^\ast(C):= \operatorname{argmax}_{b \in
    (0,\infty)} \Lambda(b,1/b;C)$.
  Theorem~\ref{thm:equivalence}~\ref{item:varphi:ast:asymptotic} indicates that,
  if $|\mathcal B^\ast|=1$, then any function of maximal dependence
  $\varphi^\ast$ admits the same right-sided derivative at $0$.  We do not
  expect such a statement to hold when $\mathcal B^\ast$ is not a singleton.
  Indeed, suppose that $\mathcal B^\ast(C)=\{b_1,b_2\}$ for $b_1,b_2\in (0,\infty)$ with $b_1\neq b_2$ and that $C$ admits two functions of maximal dependence $\varphi_1^\ast$ and $\varphi_2^\ast$ such that $\lim_{u\downarrow 0}\varphi_1^\ast(u)/u=b_1$ and $\lim_{u\downarrow 0}\varphi_2^\ast(u)/u=b_2$.
  Then, for any measurable $A\subseteq (0,1]$, the function $\varphi_A=\varphi_1^\ast\,\id_A +\varphi_2^\ast\,\id_{A^c}$ is also
  a function of maximal dependence for $C$.
  If we take $A=\cup_{n\ge 1}(2^{-2n},2^{-2n+1}]$, the resulting function $\varphi_A\in \mathcal A$ with $\varphi_A(0)=0$ does not admit a right-sided
  derivative at $0$.
\end{Remark}

\subsection{Numerical illustration}
We now present numerical experiments in support of
Theorem~\ref{thm:equivalence}. We consider two non-exchangeable survival copulas
known to admit non-degenerate tail copulas.  Note that, for a copula $C$ and
$(U,V)\sim C$, its survival copula $\hat C$ is the copula of $(1-U,1-V)$ given
by $\hat C(u,v)=-1+u+v+C(1-u,1-v)$, $(u,v)\in[0,1]^2$.  Our first model is the
survival Marshall--Olkin copula $\hat C_{\alpha,\beta}^\text{MO}$ (with
parameters $\alpha=0.35$ and $\beta=2\alpha=0.7$), where the
\emph{Marshall--Olkin copula} is defined by
$C_{\alpha,\beta}^\text{MO}(u,v)=\min(u^{1-\alpha}v, uv^{1-\beta})$,
$(u,v)\in[0,1]^2$. Our second example is the survival asymmetric Gumbel copula
$\hat C^{\text{AG}}_{\alpha,\beta,\theta}$ (with parameters $\alpha=0.35$,
$\beta=2 \alpha=0.7$ and $\theta=2$), where the \emph{asymmetric Gumbel copula}
is defined by
$C^{\text{AG}}_{\alpha,\beta,\theta}(u,v)=e^{\ln(uv)A(\ln(v)/\ln(uv))}$,
$(u,v)\in[0,1]^2$, for the parametric Pickands dependence
function 
\begin{align*}
  A_{\alpha,\beta,\theta} (w)=(1-\alpha)w + (1-\beta)(1-w) + \{(\alpha w)^\theta + (\beta(1-w))^\theta\}^{1/\theta}, \quad w\in [0,1],
\end{align*}
with $\alpha,\beta \in (0,1]$ and $\theta>1$;
see~\citet[Section~4.15]{joe2015dependence}. The tail copula of
$\hat C^{\text{AG}}_{\alpha,\beta,\theta}$ and its MTCM (for general parameters)
can be found in~\citet{koike2023measuring}. In particular, the MTCM is uniquely
attained at $b^\ast=\sqrt{\beta/\alpha}$.

The two rows in Figure~\ref{fig} show the results for the survival Marshall--Olkin
copula $\hat C_{\alpha,\beta}^\text{MO}$ and the survival asymmetric Gumbel copula
$\hat C^{\text{AG}}_{\alpha,\beta,\theta}$, respectively.
\begin{figure}[htbp]
  \centering
  \includegraphics[height=0.24\textheight]{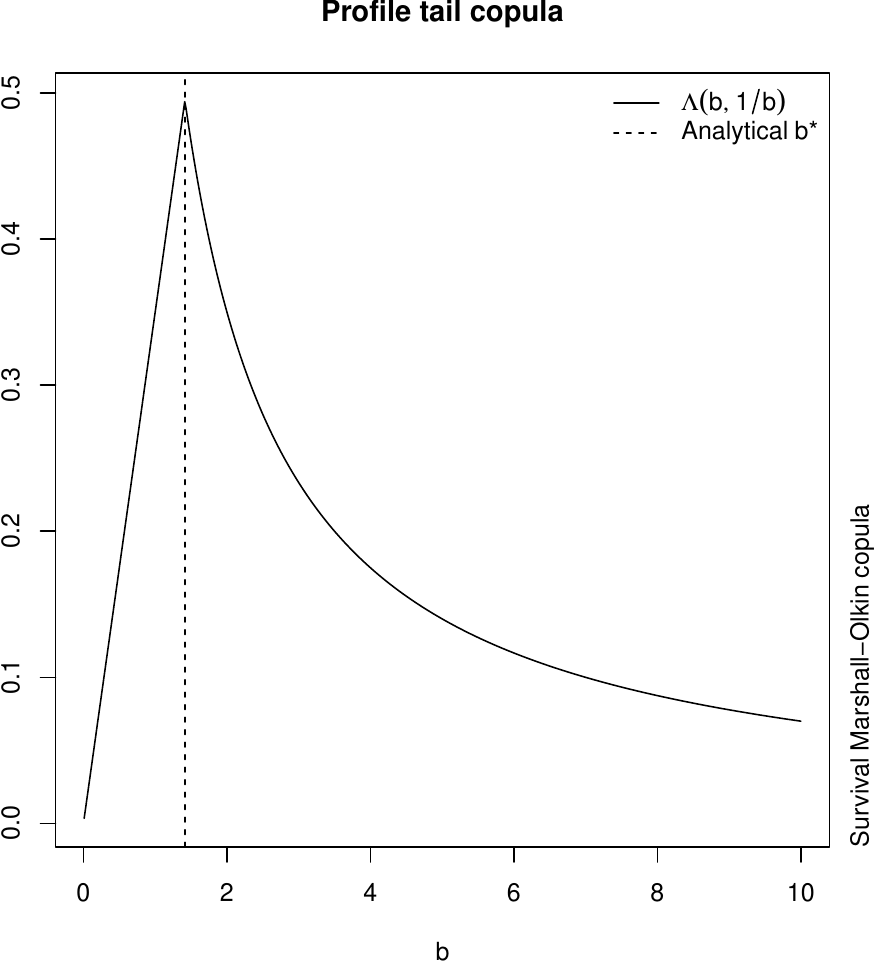}\hfill
  \includegraphics[height=0.24\textheight]{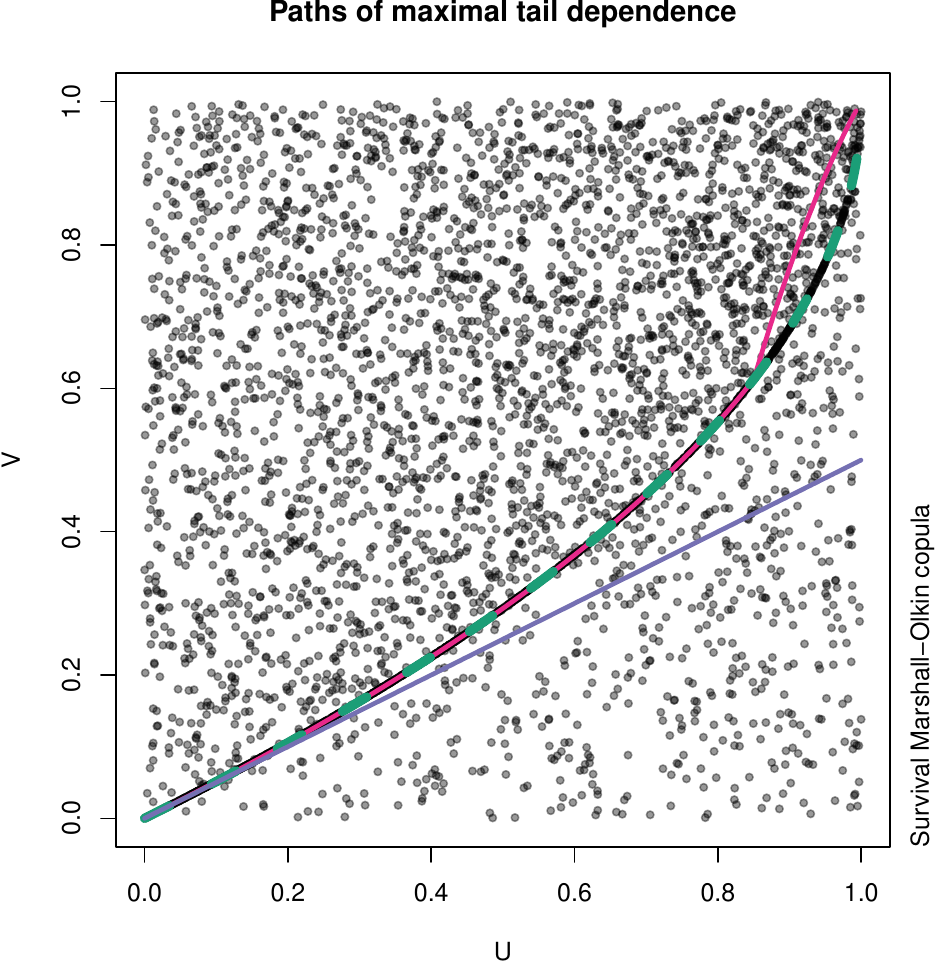}\hfill
  \includegraphics[height=0.24\textheight]{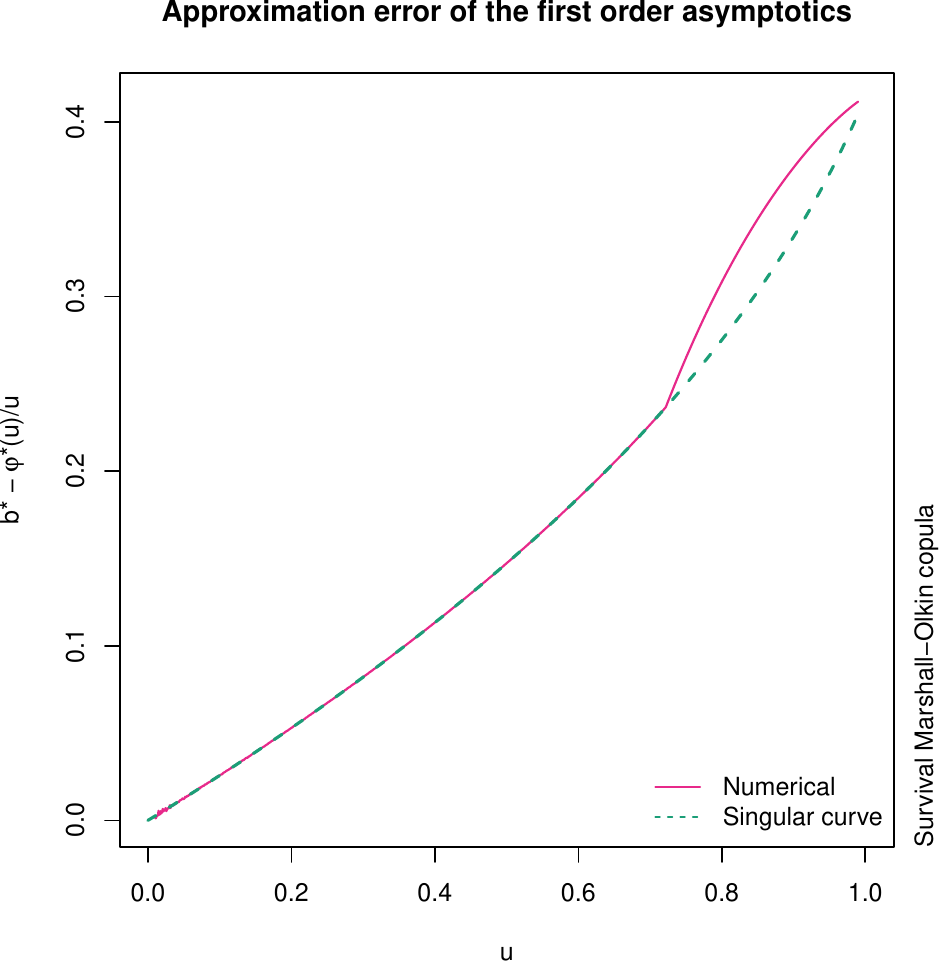}\\[3mm]
  \includegraphics[height=0.24\textheight]{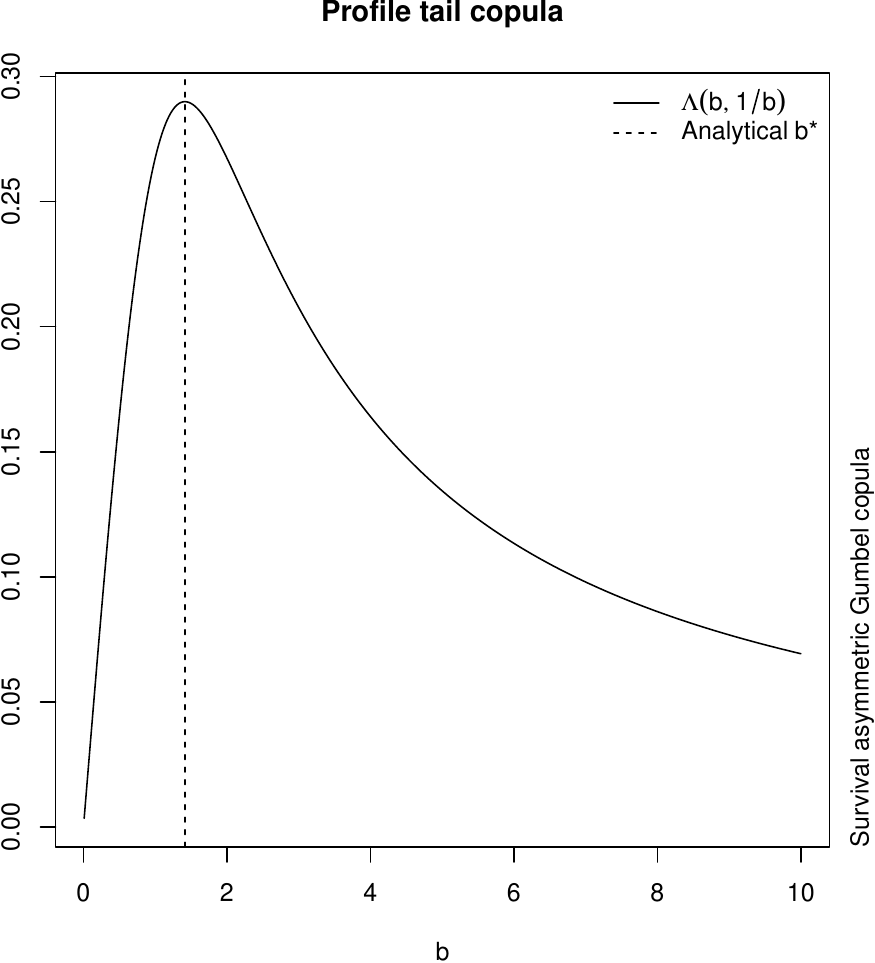}\hfill
  \includegraphics[height=0.24\textheight]{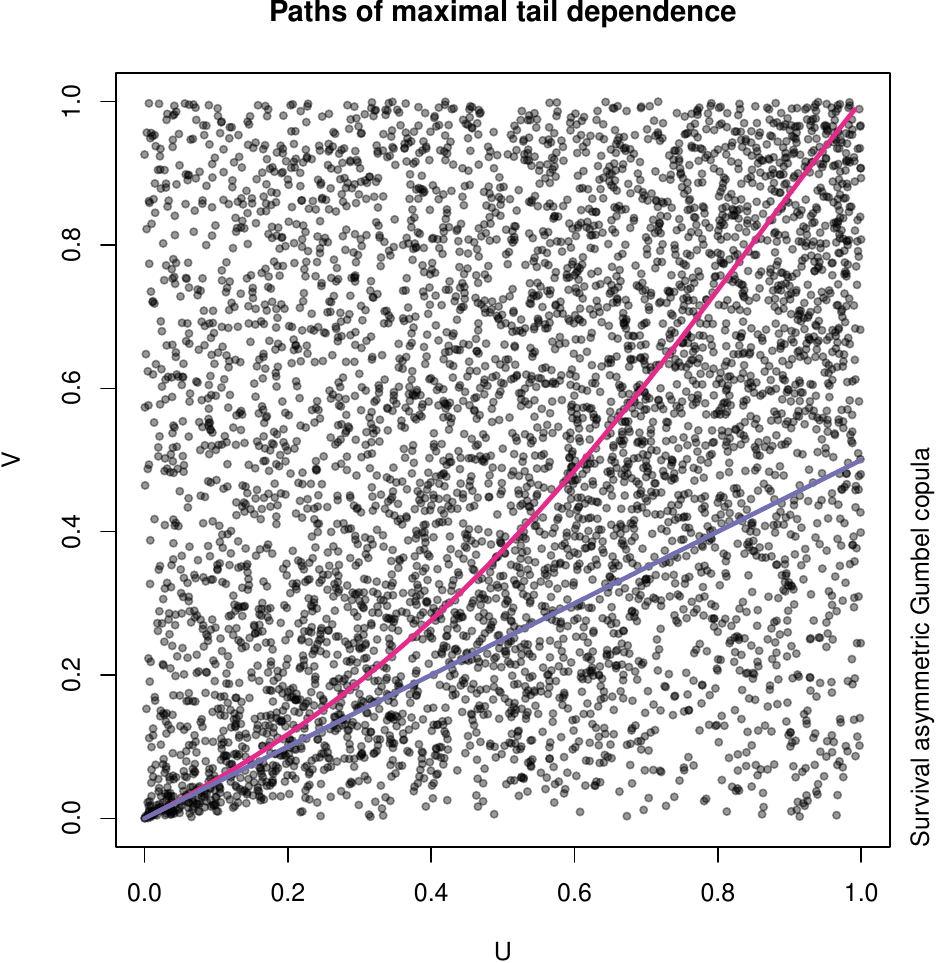}\hfill
  \includegraphics[height=0.24\textheight]{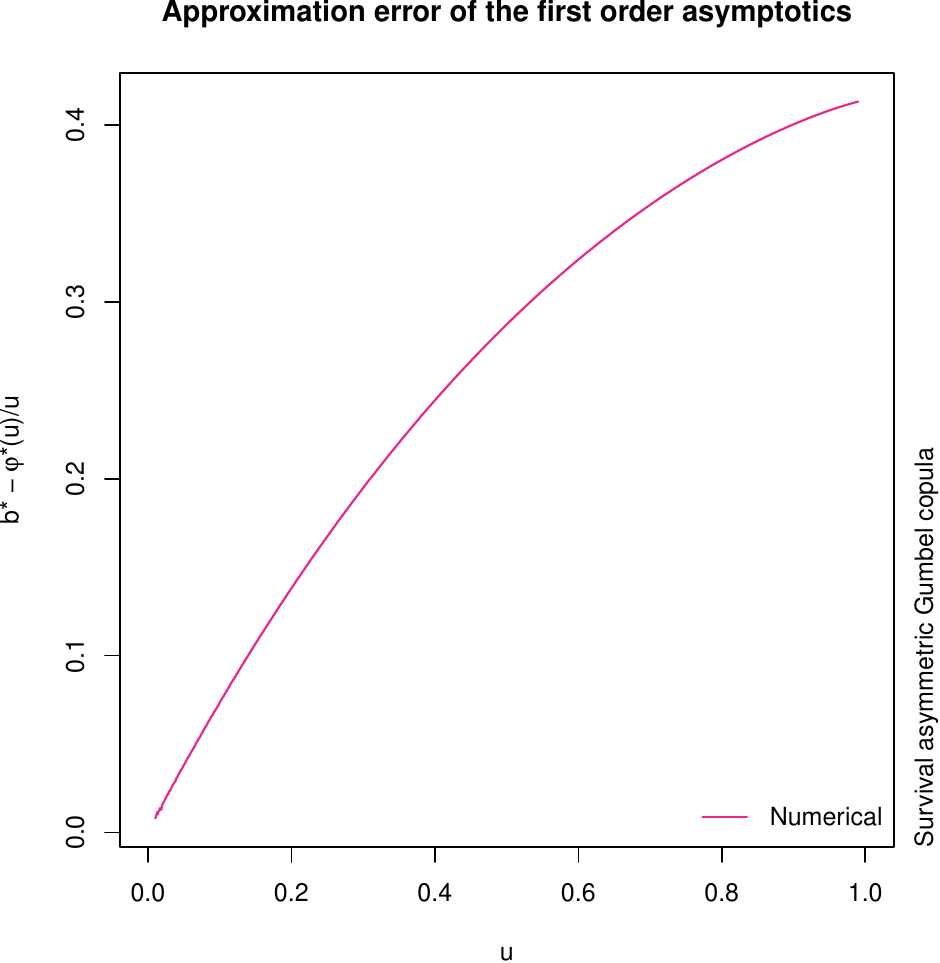}
  \caption{Comparison of the path-based maximal TDC $\lambda_{\varphi^\ast}(C)$
    and the MTCM $\lambda^\ast(C)$ for a survival Marshall--Olkin copula
    $\hat C_{\alpha=0.35,\beta=0.7}^\text{MO}$ (first row) and a survival
    asymmetric Gumbel copula
    $\hat C^{\text{AG}}_{\alpha=0.35,\beta=0.7,\theta=2}$ (second row).  The
    first column displays profile tail copula plots $b\mapsto \Lambda(b,1/b)$,
    with the vertical dashed lines indicating $b^\ast$, derived analytically.
    The second column shows scatter plots of size $5000$ of the respective
    copulas, overlaid with the numerically searched path of maximal dependence
    $(\varphi^\ast(u),u^2/\varphi^\ast(u))$ (red), the straight line passing
    through $(b^\ast,1/b^\ast)$ (blue), and, only available for
    $\hat C_{\alpha=0.35,\beta=0.7}^\text{MO}$, the singular component
    $(x_u^\ast,u^2/x_u^\ast)_{u\in (0,1]}$ (green, dashed,
    see~\eqref{eq:curve:smo} for $x_u^\ast$).  The third column displays the
    difference $u\mapsto b^\ast - \varphi^\ast(u)/u$ (red).  The green dashed
    line, only available for $\hat C_{\alpha=0.35,\beta=0.7}^\text{MO}$,
    displays $u\mapsto b^\ast - x_u^\ast/u$.}
  \label{fig}
\end{figure}
The first column shows their profile tail copula $b\mapsto\Lambda(b,1/b;C)$,
where the vertical dashed line indicates the unique maximizer $b^\ast$, computed
analytically. The second column displays a scatter plot of size $5000$ from the
respective model, overlaid with the numerically searched path of maximal dependence
$(\varphi^\ast(u),u^2/\varphi^\ast(u))_{u\in(0,1]}$ (red) and the straight line passing
through $(0,0)$ and $(b^\ast,1/b^\ast)$ (blue).  For
$\hat C_{\alpha,\beta}^\text{MO}$, the singular component (green, dashed) is
also included, which is found analytically. In Remark~\ref{rem:closed:form:beta:2alpha} in Section~\ref{sec:SMO}, we will
specify the closed-form expression of this curve as $(x_u^\ast,u^2/x_u^\ast)$,
where
\begin{align}\label{eq:curve:smo}
  x_u^\ast = 2u\sqrt{\frac{2}{3}}\cos\left(\frac{1}{3}\arccos\left(-\frac{3\sqrt{6}u}{8}\right)\right).
\end{align}
We will also show in Proposition~\ref{prop:smo} in Section~\ref{sec:SMO} that
$\varphi^\ast(u)$, $x_u^\ast$ and $\sqrt{2}u$ are all asymptotically equivalent
as $u\downarrow 0$ for any function of maximal dependence $\varphi^\ast$ of
$\hat C_{\alpha,\beta}^\text{MO}$. A feature visible from Figure~\ref{fig} is
that $x_u^\ast$ deviates from the numerically searched $\varphi^\ast$ for $u$
close to $1$.  In other words, the singular curve does not in general coincide with the path of maximal dependence on the entire interval $(0,1)$.
The final column shows the difference $u\mapsto b^\ast - \varphi^\ast(u)/u$ to
assess the convergence of $\varphi^\ast(u)/u$ to $b^\ast$ for $u\downarrow 0$ as
stated in Theorem~\ref{thm:equivalence}~\ref{item:varphi:ast:asymptotic}. The
red solid line corresponds to the numerically searched path of maximal
dependence.  For the survival Marshall--Olkin copula, we also show
$u\mapsto b^\ast - x_u^{\ast}/u$ as green dashed line.

\section{Applications to existing copulas}\label{sec:applications}

\subsection{Bivariate $t$-copulas}

In this section we focus on bivariate $t$-copulas, which are known to be in the
domain of attraction of the corresponding $t$-EV
copulas~\citep{demarta2005t,nikoloulopoulos2009extreme}.  By deriving the
spectral representation of the $t$-EV copula, we will show that the profile tail
copula of the $t$-copula is uniquely maximized at $b^\ast=1$.

In the bivariate setting, a copula $C$ admits a tail copula if and only if $C$ is in the domain of attraction of an \emph{extreme-value (EV) copula} $C_\ell$, that is $\lim_{n \to \infty}C(u^{1/n},v^{1/n})^n = C_{\ell}(u,v)$, $(u,v)\in[0,1]^2$; see~\citet[][Section~2]{einmahl2008method} and \citet{gudendorf2010extreme}.
Here, $C_{\ell}$ is an EV copula defined by $C_\ell(u,v)=\exp\{
  -\ell(-\ln u,-\ln v)
  \}$, $(u,v) \in [0,1]^2$, for a \emph{stable tail dependence
  function} $\ell:[0,\infty)^2\to [0,\infty)$ satisfying convexity, $1$-homogeneity and
$\max(x,y)\le\ell(x,y)\le x+y$, $(x,y)\in[0,\infty)^2$ \citep{ressel2013}.
Note that $C_\ell$ is max-stable in the sense that $C_{\ell}(u^{1/n},v^{1/n})^n = C_{\ell}(u,v)$ for all $(u,v)\in[0,1]^2$ and all integers $n\ge 1$;~see~\citet{gudendorf2010extreme}.
Therefore, $C_\ell$ itself is in the domain of attraction of $C_\ell$.
The relationship between the tail copula $\Lambda$ of $C$ and the stable tail dependence function $\ell$ of $C$ is
 \begin{align*}
     \Lambda(x, y;\hat{C}) = x + y - \ell(x, y),\quad (x,y)\in(0,\infty)^2;
  \end{align*}
  see~\citet[][Section~2]{einmahl2008method}. Therefore, the tail copula
  $\Lambda(\cdot;\hat C)$ is non-degenerate if and only if
  $\ell(x,y)\not\equiv x+y$, that is if $C_{\ell}\neq\Pi$.

Recall that a stable tail dependence function $\ell$ is associated with the \emph{spectral (angular) measure} $H$ on $[0,1]$ via
\begin{align*}
  \ell(x,y)=\int_{[0,1]} \max\{wx,(1-w)y\}\,H({\rm d}w),\quad (x,y) \in [0,\infty)^2,
\end{align*}
where $H$ satisfies the constraints $\int_{[0,1]} w\,H({\rm d}w)=1$ and
$ \int_{[0,1]} (1-w)\,H({\rm d}w)=1$; see, for
example,~\citet{gudendorf2010extreme} for details.  Using this measure, the tail
copula can also be represented by
\begin{align}\label{eq:Lambda:H}
  \Lambda(x,y;\hat C)=\int_{[0,1]} \min\{wx,(1-w)y\}\,H({\rm d}w),
  \quad (x,y) \in (0,\infty)^2;
\end{align}
see~\citet[][Section~2]{einmahl2008method}.  Note that the tail copula is
degenerate if and only if $H=\delta_0+\delta_1$ for Dirac measures $\delta_0,\delta_1$, that is $H$ is entirely concentrated on $\{0,1\}$.  Using
this spectral representation, we analyze the MTCM and its attainer $b^\ast$ in
the following lemma.
\begin{Lemma}[MTCM and the spectral measure]\label{lem:mtcm:and:h}
 Let $C$ be a bivariate copula in the domain of attraction of an extreme-value copula $C_\ell$, where $\ell$ is a stable tail dependence function.
Assume that the associated spectral measure $H$ has a Lebesgue density $h$ on $(0,1)$.
   Then the following statements hold.
   \begin{enumerate}[label=(\roman*), labelwidth=\widthof{(iii)}]
\item For $m:\mathbb{R}\to \mathbb{R}$ given by
\begin{align}\label{eq:m:function}
  m(a)=2\{w(a)(1-w(a))\}^{3/2}\,h(w(a)),\quad w(a)=\frac{e^{2a}}{1+e^{2a}},
\end{align}
it holds that
\begin{align*}
    \lambda^\ast(\hat C)=\max_{s \in \R} \int_{-\infty}^{\infty} e^{-|s+a|}\,m(a)\,{\rm d}a.
\end{align*}
\item Suppose that $m$ in~\eqref{eq:m:function} is even, integrable on $\R$ and strictly decreasing on $(0,\infty)$.
Then the MTCM $\lambda^\ast(\hat C)$ is uniquely attained at $b^\ast=1$.
   \end{enumerate}
\end{Lemma}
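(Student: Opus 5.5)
The plan is to rewrite the profile tail copula through the spectral representation~\eqref{eq:Lambda:H}, using the logarithmic change of variables $b=e^{s}$ and $w=w(a)$. Part (ii) then becomes a statement about the convolution of two symmetric unimodal functions.

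For (i), I would first note that the atoms of $H$ at $0$ and $1$ contribute nothing to~\eqref{eq:Lambda:H}, because $\min\{wx,(1-w)y\}=0$ when $w\in\{0,1\}$. Hence
\begin{align*}
\Lambda(b,1/b;\hat C)=\int_0^1 \min\{wb,(1-w)/b\}\,h(w)\,\rd w .
\end{align*}
Next factor out the geometric mean:
\begin{align*}
\min\{wb,(1-w)/b\}=\sqrt{w(1-w)}\,\min\Bigl\{\sqrt{w/(1-w)}\,b,\ \sqrt{(1-w)/w}\,b^{-1}\Bigr\}.
\end{align*}
With $w=w(a)$ we have $w/(1-w)=e^{2a}$ and $\rd w=2w(1-w)\,\rd a$. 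Putting $b=e^{s}$, the minimum becomes $\min\{e^{a+s},e^{-(a+s)}\}=e^{-|s+a|}$. The integrand therefore transforms into $e^{-|s+a|}\cdot 2\{w(1-w)\}^{3/2}h(w(a))=e^{-|s+a|}m(a)$, which gives
\begin{align*}
\Lambda(e^{s},e^{-s};\hat C)=F(s):=\int_{\R} e^{-|s+a|}\,m(a)\,\rd a .
\end{align*}
Since $s\mapsto e^{s}$ is a bijection from $\R$ onto $(0,\infty)$, taking the supremum over $s$ gives $\lambda^\ast(\hat C)$. The supremum is a maximum by the attainability of the MTCM recalled after~\eqref{eq:def:mtcm}.

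For (ii), integrability of $m$ gives $F(s)\le\int m<\infty$. I would substitute $x=s+a$ to get $F(s)=\int k(x)\,m(x-s)\,\rd x$ with $k(x)=e^{-|x|}$, so that
\begin{align*}
F(s)-F(0)=\int_{\R} k(x)\,g_s(x)\,\rd x,\qquad g_s(x)=m(x-s)-m(x).
\end{align*}
Fix $s>0$; the case $s<0$ is symmetric, since evenness of $m$ and $k$ makes $F$ even. The key step is the reflection $x\mapsto s-x$ about $s/2$. Because $m$ is even, this map sends $g_s$ to $-g_s$, and folding the integral yields
\begin{align*}
F(s)-F(0)=\int_{x>s/2}\bigl[k(x)-k(s-x)\bigr]\bigl[m(x-s)-m(x)\bigr]\,\rd x .
\end{align*}
For $x>s/2$ we have $|x-s|<|x|$. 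This gives $k(x)<k(s-x)$. Since $m$ is even and strictly decreasing in $|\cdot|$, it also gives $m(x-s)>m(x)$.

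The only delicate point is that the strict inequality for $m$ is needed on a set of positive measure; here it holds on all of $(s/2,\infty)$. The integrand is therefore strictly negative there, so $F(s)<F(0)$ for every $s\neq0$. Consequently the maximizer is unique, namely $s=0$, i.e.\ $b^\ast=e^0=1$.
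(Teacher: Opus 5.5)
Your proof is correct. Part (i) is the same computation as the paper's: drop the atoms at $0,1$, substitute $b=e^{s}$ and $w=w(a)$, and factor out $\sqrt{w(1-w)}$.

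For part (ii) you take a genuinely different route. The paper argues through derivatives. It justifies differentiating $L(s)=\int e^{-|s+a|}m(a)\,\mathrm{d}a$ under the integral by dominated convergence, using that the kernel is $1$-Lipschitz. It then computes
$L'(s)=-2e^{-s}\int_0^s\cosh(a)m(a)\,\mathrm{d}a+2\sinh(s)\int_s^\infty e^{-a}m(a)\,\mathrm{d}a$ for $s>0$. Finally it bounds $m$ by $m(s)$ on each piece to get $L'(s)<0$.

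You avoid derivatives altogether. The antisymmetry $g_s(s-x)=-g_s(x)$ lets you fold the integral onto $(s/2,\infty)$, where both factors have a definite sign. This argument has several advantages:
\begin{itemize}
\item it uses only that $k$ and $m$ are even and strictly decreasing in $|x|$, so it works for any such kernel, not just $e^{-|x|}$;
\item it needs no $\cosh/\sinh$ identities and no interchange of limit and integral;
\item it delivers exactly $F(s)<F(0)$ for $s\neq0$, which is all that uniqueness of $b^\ast=1$ requires.
\end{itemize}

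The paper's derivative formula gives a stronger shape statement. $L$ is strictly decreasing on $(0,\infty)$, so the profile tail copula $b\mapsto\Lambda(b,1/b;\hat C)$ is strictly increasing on $(0,1)$ and strictly decreasing on $(1,\infty)$. Your reflection could recover this too. Write $F(s_2)-F(s_1)$ as an integral against the shifted kernel $k(\cdot+s_1)$ and fold about $(s_2-s_1)/2$; this works because $(y+s_1)^2-(s_2-y)^2=(2y-s_2+s_1)(s_2+s_1)>0$ for $y>(s_2-s_1)/2$.

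One cosmetic point: at $x=s$ the comparison would need $m(0)>m(s)$. This is not covered by strict decrease on $(0,\infty)$, so the strict sign holds on $(s/2,\infty)$ except at that single point. This does not affect the conclusion.
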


By specifying the density $h$ of the spectral measure of a $t$-EV copula, it follows from Lemma~\ref{lem:mtcm:and:h} that the maximizer is $b^\ast=1$ for $t$-copulas.

\begin{Proposition}[Attaining $b^\ast$ for $t$-copulas]\label{prop:bivariate:t:b:1}
  Let $C_{\nu,\rho}$ 
  be the bivariate $t$-copula with degrees of freedom parameter $\nu>0$ and
  correlation parameter $\rho\in (-1,1)$.  Then its MTCM is uniquely attained at
  $b^\ast=1$.
\end{Proposition}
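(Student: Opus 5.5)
The plan is to reduce the statement to Lemma~\ref{lem:mtcm:and:h}~(ii). The $t$-copula is radially symmetric ($\hat C_{\nu,\rho}=C_{\nu,\rho}$), so the lower tail copula of $C_{\nu,\rho}$ equals that of $\hat C_{\nu,\rho}$. By \citet{demarta2005t} and \citet{nikoloulopoulos2009extreme}, $C_{\nu,\rho}$ lies in the domain of attraction of the $t$-EV copula $C_\ell$, whose stable tail dependence function is
\begin{align*}
  \ell(x,y)=x\,T_{\nu+1}\bigl(\zeta(x/y)\bigr)+y\,T_{\nu+1}\bigl(\zeta(y/x)\bigr),\qquad
  \zeta(r)=\sqrt{\tfrac{\nu+1}{1-\rho^2}}\,\bigl(r^{1/\nu}-\rho\bigr).
\end{align*}
Here $T_{\nu+1}$ is the univariate $t$ distribution function with $\nu+1$ degrees of freedom. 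So it suffices to show three things: the spectral measure $H$ of $\ell$ has a Lebesgue density $h$ on $(0,1)$, and the function $m$ in~\eqref{eq:m:function} is (a) even, (b) integrable and (c) strictly decreasing on $(0,\infty)$.

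The first step is to compute $h$. Since $\ell$ is smooth on $(0,\infty)^2$, the density of $H$ on $(0,1)$ is obtained from the mixed derivative $-\partial^2\ell/\partial x\partial y$, evaluated along the simplex and expressed in terms of $w$ via $x/y=(1-w)/w$ (up to the standard Jacobian factor for the parametrization used in the representation of $\ell$). Differentiating the formula above gives an expression of the form $t_{\nu+1}(\zeta(r))$ times powers of $r$ and of $w,1-w$, where $t_{\nu+1}$ is the $t$ density. Substituting $w=w(a)$, i.e.\ $w/(1-w)=e^{2a}$ so that $r=e^{\mp 2a}$, should yield a closed form for $m(a)$ in terms of $e^{2a/\nu}$ and $\cosh(a)$.

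Evenness and integrability are the easy parts. Exchangeability of $C_{\nu,\rho}$ gives $\ell(x,y)=\ell(y,x)$, hence $h(w)=h(1-w)$. Since $w(-a)=1-w(a)$, the function $m$ is even. For integrability, use $\mathrm{d}w=2w(1-w)\,\mathrm{d}a$:
\begin{align*}
  \int_{\R} m(a)\,\mathrm{d}a=\int_0^1\{w(1-w)\}^{1/2}h(w)\,\mathrm{d}w\le \tfrac12 H((0,1))\le 1<\infty .
\end{align*}

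The main obstacle is strict monotonicity of $m$ on $(0,\infty)$, which must hold for every $\nu>0$ and $\rho\in(-1,1)$, including $\rho\le 0$. My first attempt is to take logarithms and show $\frac{\mathrm d}{\mathrm da}\log m(a)<0$ for $a>0$. This derivative splits into three parts: an algebraic term coming from the powers of $w(1-w)$ and $r$, which behaves like $-c\tanh(a)$; a term $-\frac{\nu+2}{\nu+1}\,\frac{\zeta\zeta'}{1+\zeta^2/(\nu+1)}$ coming from the $t_{\nu+1}$ kernel; and the contribution of the symmetric counterpart. Writing $s=e^{2a/\nu}>1$, the inequality becomes a polynomial-type inequality in $s$, $s^{-1}$ and $\rho$. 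I would try to show it holds by grouping terms symmetric under $s\leftrightarrow s^{-1}$ and using $|\rho|<1$ to bound the cross terms $\rho(s+s^{-1})\le s+s^{-1}$. If this direct route becomes unwieldy, the fallback is to note that $m$ depends on $a$ only through $u=\cosh(2a/\nu)$ (or a similar even quantity) and prove that $m$ is strictly decreasing in $u\ge 1$. Once (c) is established, Lemma~\ref{lem:mtcm:and:h}~(ii) gives that the MTCM of $\hat C_{\nu,\rho}=C_{\nu,\rho}$ is uniquely attained at $b^\ast=1$.
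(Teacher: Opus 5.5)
Your reduction to Lemma~\ref{lem:mtcm:and:h}~(ii) via radial symmetry is sound, and so are evenness via exchangeability and the integrability bound $\int_{\R}m(a)\,\mathrm{d}a=\int_0^1\{w(1-w)\}^{1/2}h(w)\,\mathrm{d}w\le\frac12H((0,1))\le1$. That bound is neater than the paper's $m(a)\le Ke^{-(1+2/\nu)a}$, and it shows that integrability is automatic for any $H$ with a density.

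The gap is step (c), which is the real content of the proposition. You never compute $h$, so strict decrease of $m$ on $(0,\infty)$ is only hoped for. Your primary route rests on a guessed shape of $h$ (a ``symmetric counterpart'' term, dependence on $\cosh(a)$) that the actual computation does not produce. The proposed bound $\rho(s+s^{-1})\le s+s^{-1}$ therefore has nothing concrete to act on. Your fallback does not close the gap either. Because $m$ is even, it is automatically a function of $\cosh(2a/\nu)$, so ``$m$ is strictly decreasing in $\cosh(2a/\nu)\ge1$'' just restates (c).

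The missing ingredient is the identity
\begin{align*}
  t_{\nu+1}\bigl(\eta(\rho-u^{-1})\bigr)=u^{\nu+2}\,t_{\nu+1}\bigl(\eta(\rho-u)\bigr),\qquad u>0,\quad \eta=\sqrt{(\nu+1)/(1-\rho^2)},
\end{align*}
which follows from the explicit $t_{\nu+1}$ density and $\eta^2(1-\rho^2)=\nu+1$.

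\begin{itemize}
\item With $\zeta(r)=\eta(r^{1/\nu}-\rho)$, this identity makes the two cross terms in $\partial\ell/\partial x$ cancel, so $(\partial\ell/\partial x)(x,y)=T_{\nu+1}(\zeta(x/y))$.
\item The mixed derivative is then a single $t_{\nu+1}$ term, with no second term and no $t_{\nu+1}'$.
\item Using $h(w)=-\{w(1-w)\}^{-1}(\partial^2\ell/\partial x\partial y)(1-w,w)$, one gets $m(a)=\frac{2\eta}{\nu}e^{-(1+2/\nu)a}\,t_{\nu+1}\bigl(\eta\{e^{-2a/\nu}-\rho\}\bigr)$ for all $a\in\R$.
\item Inserting the $t$ density gives
\begin{align*}
  m(a)=K\bigl(\cosh(2a/\nu)-\rho\bigr)^{-(\nu+2)/2},\qquad K>0.
\end{align*}
\end{itemize}
Since $\cosh(2a/\nu)-\rho>0$ is strictly increasing on $(0,\infty)$, (c) and evenness follow immediately.

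The paper does the same computation but concludes via $\frac{\mathrm{d}}{\mathrm{d}a}\ln m(a)=\frac{\nu+2}{\nu}\,\frac{u^2-1}{1+u^2-2\rho u}<0$ with $u=e^{-2a/\nu}\in(0,1)$. Your fallback form is the integrated version of this, so your instinct was right, but it only becomes a proof once the calculation is carried out.
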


This proposition, together with Theorem~\ref{thm:equivalence}, states that the
diagonal is asymptotically the unique path of maximal dependence for
$C_{\nu,\rho}$, and hence its path-based maximal TDC
$\lambda_{\varphi^\ast}$ coincides with the standard TDC $\lambda$.

\subsection{Survival Marshall--Olkin copulas}\label{sec:SMO}

Next, we consider the survival Marshall--Olkin copula $\hat{C}_{\alpha,\beta}^\text{MO}$ with
parameters $(\alpha,\beta)\in(0,1]^2$. The survival Marshall--Olkin copula admits the following tail copula and MTCM. 

\begin{Lemma}[MTCM for $\hat{C}_{\alpha,\beta}^\text{MO}$]\label{lem:tail:copula:smo}
  The survival Marshall--Olkin copula $\hat C_{\alpha,\beta}^\text{MO}$ admits the tail
  copula
  \begin{align*}
    \Lambda(x,y;\hat C_{\alpha,\beta}^\text{MO})=\min(\alpha x,\, \beta y),\quad (x,y)\in(0,\infty)^2,
  \end{align*}
  with corresponding MTCM given by $\lambda^\ast(\hat C_{\alpha,\beta}^\text{MO})=\sqrt{\alpha\beta}$,
  where the maximum is uniquely attained at $b^\ast = \sqrt{\beta/\alpha}$.
\end{Lemma}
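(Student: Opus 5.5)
We first compute the tail copula directly from the closed form of the survival copula, and then maximize the profile tail copula by hand.

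\textbf{Step 1: the survival copula.} From $\hat C(u,v)=-1+u+v+C(1-u,1-v)$ we get, for $t\downarrow 0$ and fixed $(x,y)\in(0,\infty)^2$,
\begin{align*}
  \hat C^\text{MO}_{\alpha,\beta}(tx,ty)
  =-1+tx+ty+\min\bigl((1-tx)^{1-\alpha}(1-ty),\,(1-tx)(1-ty)^{1-\beta}\bigr).
\end{align*}

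\textbf{Step 2: expanding each branch.} A first-order Taylor expansion gives
\begin{align*}
  (1-tx)^{1-\alpha}(1-ty)&=1-(1-\alpha)tx-ty+O(t^2),\\
  (1-tx)(1-ty)^{1-\beta}&=1-tx-(1-\beta)ty+O(t^2).
\end{align*}
The minimum is a continuous operation. Substituting, the first branch contributes $\alpha tx+O(t^2)$ and the second contributes $\beta ty+O(t^2)$. Hence
\begin{align*}
  \hat C^\text{MO}_{\alpha,\beta}(tx,ty)=t\min(\alpha x,\beta y)+O(t^2).
\end{align*}
Dividing by $t$ and letting $t\downarrow0$ yields $\Lambda(x,y)=\min(\alpha x,\beta y)$.

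The $O(t^2)$ terms need care only because $\min(a+O(t^2),b+O(t^2))=\min(a,b)+O(t^2)$. This identity is immediate since $\min$ is $1$-Lipschitz, so there is no real obstacle here.

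\textbf{Step 3: maximizing the profile tail copula.} The profile is $g(b)=\min(\alpha b,\beta/b)$ for $b\in(0,\infty)$. The first argument $\alpha b$ is strictly increasing in $b$, and the second argument $\beta/b$ is strictly decreasing. Therefore $g$ is strictly increasing on the set where $\alpha b\le\beta/b$, namely $b\le\sqrt{\beta/\alpha}$. It is strictly decreasing on the complementary set $b\ge\sqrt{\beta/\alpha}$.

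It follows that the maximum is attained uniquely at the crossing point $b^\ast=\sqrt{\beta/\alpha}$. The maximal value is $\alpha\sqrt{\beta/\alpha}=\sqrt{\alpha\beta}$, which equals $\lambda^\ast(\hat C^\text{MO}_{\alpha,\beta})$.

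Since $\alpha,\beta>0$, this value is positive. In particular $\Lambda$ is non-degenerate, so Theorem~\ref{thm:equivalence} applies to this copula.
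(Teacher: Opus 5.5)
Your proof is correct and follows essentially the same route as the paper: a first-order expansion of both branches of $C^{\mathrm{MO}}_{\alpha,\beta}(1-tx,1-ty)$ inside the survival-copula identity, followed by the observation that $\min(\alpha b,\beta/b)$ is strictly increasing up to the crossing point $\sqrt{\beta/\alpha}$ and strictly decreasing afterwards. The only cosmetic difference is in the bookkeeping: you keep the minimum with $O(t^2)$ remainders and invoke the Lipschitz property of $\min$, whereas the paper uses $o(t)$ remainders and rewrites the limit as $x+y-\max\{y+(1-\alpha)x,\,x+(1-\beta)y\}$.
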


The next proposition shows that any function of maximal dependence for
$\hat C_{\alpha,\beta}^\text{MO}$ asymptotically coincides with its singular
curve. To this end, for two general functions $f,g:\mathbb{R}\to \mathbb{R}$, we
write $f(u)\simeq g(u)$, $u \to u^\ast \in [-\infty,\infty]$, if
$\lim_{u \to u^\ast}(f(u)/g(u))=1$.

\begin{Proposition}[Maximal tail dependence for $\hat{C}_{\alpha,\beta}^{\text{MO}}$]\label{prop:smo}
  Consider the survival Marshall--Olkin copula $\hat{C}_{\alpha,\beta}^{\text{MO}}$ with parameters $\alpha, \beta \in (0,1]$.
 \begin{enumerate}[label=(\roman*), labelwidth=\widthof{(iii)}]
\item\label{item:smo:singular:curve} For each $u \in (0,1]$, the equation (in $x$)
  \begin{align}
    (1-x)^\alpha = \left(1-\frac{u^2}{x}\right)^\beta\label{eq:singular:curve}
  \end{align}
  has a unique solution in $[u^2,1]$, denoted by $x_u^\ast$.
\item Any function of maximal dependence $\varphi^\ast$ satisfies the asymptotic
  equivalence $$\varphi^\ast(u) \simeq x_u^\ast \simeq u\sqrt{\frac{\beta}{\alpha}}\qquad\text{as}\quad
  u\downarrow 0.$$
  \end{enumerate}
\end{Proposition}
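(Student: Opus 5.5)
Part (i) reduces to a monotonicity argument on $[u^2,1]$. Set $f(x)=\alpha\ln(1-x)$ and $g(x)=\beta\ln(1-u^2/x)$. Then $f$ is strictly decreasing and $g$ is strictly increasing, so $f-g$ is strictly decreasing on $(u^2,1)$. For $u\in(0,1)$ the endpoints behave as follows: at $x=u^2$ the left side of \eqref{eq:singular:curve} equals $(1-u^2)^\alpha>0$ while the right side equals $0$; at $x=1$ the left side is $0$ while the right side is $(1-u^2)^\beta>0$. By continuity and strict monotonicity of $(1-x)^\alpha-(1-u^2/x)^\beta$ there is a unique root $x_u^\ast\in(u^2,1)$. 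For $u=1$ the interval collapses to $\{1\}$ and $x=1$ is trivially the unique solution.

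For (ii), the equivalence $\varphi^\ast(u)\simeq u\sqrt{\beta/\alpha}$ follows directly from Theorem~\ref{thm:equivalence}~\ref{item:varphi:ast:asymptotic} together with Lemma~\ref{lem:tail:copula:smo}. The lemma gives a non-degenerate tail copula whose profile $b\mapsto\min(\alpha b,\beta/b)$ is uniquely maximized at $b^\ast=\sqrt{\beta/\alpha}$. It remains to show $x_u^\ast/u\to\sqrt{\beta/\alpha}$.

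Write $x_u^\ast=u\,c_u$. First I show $x_u^\ast\to0$ and $u^2/x_u^\ast\to0$. If $x_u^\ast\ge\delta$ along a subsequence, then $u^2/x_u^\ast\to0$, so the right side of \eqref{eq:singular:curve} tends to $1$ while the left side is at most $(1-\delta)^\alpha<1$, a contradiction. The same argument applies symmetrically to $u^2/x_u^\ast$. Next I take logarithms and use $\ln(1-z)=-z(1+o(1))$ as $z\to0$. The equation becomes $\alpha x_u^\ast(1+o(1))=\beta (u^2/x_u^\ast)(1+o(1))$, i.e. $c_u^2=(\beta/\alpha)(1+o(1))$, which gives $c_u\to\sqrt{\beta/\alpha}$.

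The only mildly delicate point is this a priori step that both coordinates of the singular curve vanish; the rest is a first-order Taylor expansion. Alternatively, one could identify $x_u^\ast$ as the kink of $x\mapsto\hat C^{\text{MO}}(x,u^2/x)$, but the direct asymptotic argument above avoids needing that interpretation.
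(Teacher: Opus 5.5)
Your proof is correct and follows the paper's argument essentially step for step. For (i), you use strict monotonicity of the (log-)difference on $(u^2,1)$. For (ii), you use the same a priori contradiction argument that $x_u^\ast\to0$ and $u^2/x_u^\ast\to0$, then a first-order expansion: you use $\ln(1-z)=-z(1+o(1))$ where the paper uses $g_p(z)=\{1-(1-z)^p\}/z\to p$. You finish with Theorem~\ref{thm:equivalence}~\ref{item:varphi:ast:asymptotic} and Lemma~\ref{lem:tail:copula:smo} for $\varphi^\ast$.
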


\begin{Remark}[Closed-form expression for $x_u^\ast$]\label{rem:closed:form:beta:2alpha}
  If $\beta=2\alpha$, $\alpha \in (0,1/2]$, Equation~\eqref{eq:singular:curve}
  reduces to the standard cubic equation $P_u(x)=0$, where
  $P_u(x)=x^3-2u^2x+u^4$.
  The discriminant
  \begin{align*}
    \Delta = \left(\frac{u^4}{2}\right)^2 + \left(\frac{-2u^2}{3}\right)^3
    = \frac{u^8}{4} - \frac{8u^6}{27}
    = u^6\left(\frac{u^2}{4} - \frac{8}{27}\right)
  \end{align*}
  is strictly negative for every $u \in (0,1]$, and thus there are three distinct real roots.
  In this case, the solutions are given by the trigonometric form of Cardano's formula.
  The three real roots, $x_u^{\ast\,(k)}$ for $k \in \{0,1,2\}$, without the restriction
  $x\in[u^2,1]$, are
  \begin{align}\label{eq:curve:k}
    x_u^{\ast\,(k)}
    =
    2u\sqrt{\frac{2}{3}}
    \cos\left(
      \frac{1}{3}\arccos\left(-\frac{3\sqrt{6}u}{8}\right)+\frac{2\pi k}{3}
    \right),
    \qquad k\in\{0,1,2\};
  \end{align}
  see~\citet[p.~124]{Turnbull1947}. By inspection of~\eqref{eq:curve:k}, one finds that
  $x_u^{\ast\,(1)}<0$ and $0<x_u^{\ast\,(2)}<x_u^{\ast\,(0)}\le 1$ for every $u\in(0,1]$.
  Since
  $ P_u(u^2)=u^4(u^2-1)\le0
    $ and $
    P_u(1)=(1-u^2)^2\ge0,
  $
  the polynomial $P_u$ has at least one root in $[u^2,1]$.
  Proposition~\ref{prop:smo}~\ref{item:smo:singular:curve} shows that the solution
  of~\eqref{eq:singular:curve} in $[u^2,1]$ is unique. Therefore the unique root in
  $[u^2,1]$ must be the larger of the two positive roots, namely $x_u^{\ast\,(0)}$.
  Note that, for $u$ close to $0$, we have
  $
    x_u^{\ast\,(0)} \simeq \sqrt{2}u$,
$    x_u^{\ast\,(1)} \simeq -\sqrt{2}u$
and $    x_u^{\ast\,(2)} \simeq u^2/2$
  by using the standard Taylor expansions $\arccos(z)=(\pi/2)-z+O(z^3)$, $\cos(z)=1-(z^2/2)+O(z^4)$ and $\sin(z)=z+O(z^3)$
  around $z=0$.
\end{Remark}

\section{Conclusion}\label{sec:concl}
We established a fundamental theoretical connection between two frameworks for
measuring off-diagonal tail dependence, namely, the path-based analysis of
tail dependence and the MTCM based on tail copulas. Through the lens of tail
copulas, we first proved the existence of a path of maximal dependence and the
corresponding path-based maximal TDC. Second, we established the equivalence
between the maximal TDC and the MTCM assuming the existence of a non-degenerate
tail copula.  Third, we derived the asymptotic behavior of a path of maximal
dependence near the origin. For the purpose of quantifying maximal tail
dependence along a path, our results imply that it is sufficient to study the
MTCM and its attainer, which are known to exhibit considerably improved
analytical and numerical tractability.  To demonstrate this tractability, we
studied the asymptotic behavior of a path of maximal dependence for $t$-copulas
and survival Marshall--Olkin copulas.  We showed that any path of maximal
dependence around the origin is diagonal for $t$-copulas and follows the
singular curve for survival Marshall--Olkin copulas.

\section*{Acknowledgements}
Takaaki Koike is supported by the Japan Society for the Promotion of Science (JSPS) KAKENHI grant numbers JP24K00273 and JP26K21178.




\newpage

\appendix

\section*{Appendix}

\section{Proofs}

This appendix collects all the proofs omitted in the main text.

\subsection{Theorem~\ref{thm:equivalence}}

\begin{proof}[Proof of Theorem~\ref{thm:equivalence}]
\mbox{}
\begin{enumerate}[label=(\roman*), labelwidth=\widthof{(iii)}]
\item
Throughout the proof
of~\ref{item:varphi:ast:existence}, all references to sections, theorems,
definitions, and lemmas are to \citet{aliprantis2006infinite}.  Our main tools
are Berge's maximum theorem and the Kuratowski-Ryll-Nardzewski selection theorem
in set-valued analysis; see Theorems 17.31 and 18.13, respectively.  We use the
term \emph{correspondence} to denote a set-valued map, and the notation
$\Phi: A \rightrightarrows B$ indicates that for every $u \in A$, the image
$\Phi(u)$ is a subset of $B$.  For a subset $E \subseteq B$, we denote by
$\Phi^{\operatorname{u}}(E)$ and $\Phi^{\ell}(E)$ the \emph{upper} and
\emph{lower} inverse images of $E$ under $\Phi$, respectively (Section 17.1).  A
function $\varphi: A \to B$ is called a \emph{selector} of $\Phi$ if
$\varphi(u) \in \Phi(u)$ for all $u \in A$.

Let $\Phi: A \rightrightarrows B$ be a correspondence, where $A$ and $B$ are
topological spaces with $A$ equipped with the Borel $\sigma$-algebra $\Sigma$
generated by its topology. 
Then Definition 17.2 defines which $\Phi$ are \emph{upper
  hemicontinuous (uhc)}, \emph{lower hemicontinuous (lhc)} and
\emph{continuous}.  There are also two concepts of measurability for $\Phi$:
\emph{weak measurability} and \emph{measurability}; see Definition 18.1.

Let $A=(0,1]$, $B=[0,1]$ and $\Gamma:A\rightrightarrows B$ be the correspondence $\Gamma(u)=[u^2,1]$.
Note that $\Gamma$ is continuous (in view of Theorem 17.15) and has non-empty compact values.
 Define
\begin{align*}
g:\operatorname{Gr}(\Gamma)\to\mathbb R,\qquad
g(u,x):=C\left(x,\frac{u^2}{x}\right),
\end{align*}
where $\operatorname{Gr}(\Gamma):=\{(u,x)\in A\times B:x\in\Gamma(u)\}$.
The map $g$ is continuous on the graph $\operatorname{Gr}(\Gamma)$. Let $m(u):=\max_{x\in\Gamma(u)} g(u,x)$ and define the maximizer correspondence $\Phi:A\rightrightarrows B$ by
\begin{align*}
\Phi(u):=\{x\in\Gamma(u): g(u,x)=m(u)\},\qquad u\in A.
\end{align*}
Since $B$ is Hausdorff, we have from Berge's maximum theorem (Theorem 17.31) that the value function $m$ is continuous on $A$; $\Phi$ has non-empty compact values; and $\Phi$ is uhc.

To prove the existence of a function of maximal dependence, it suffices to construct a measurable selector $\tilde\varphi$ of $\Phi$ such that $\lim_{u \to 0} \tilde \varphi(u) = 0$ and $\lim_{u \to 0} u^2/{\tilde \varphi}(u) = 0$.
According to the Kuratowski-Ryll-Nardzewski selection theorem (Theorem 18.13), a sufficient condition for $\Phi$ to admit a measurable selector is that $\Phi$ is weakly measurable and has non-empty closed values.
Since we know that $\Phi$ has non-empty closed values, it remains to show that $\Phi$ is weakly measurable.
As $B=[0,1]$ is compact, every closed set $F\subseteq B$ is compact. Since $\Phi$ is uhc, Part~3 of Lemma~17.4 yields that the lower inverse image $\Phi^\ell(F)$ is closed.
Since $A$ is equipped with the Borel $\sigma$-algebra, closed sets are measurable. Therefore, we have that $\Phi$ is measurable.
Since $B$ is metrizable, it follows from Lemma 18.2 that the measurability of $\Phi$ implies weak measurability.
Therefore, $\Phi$ admits a measurable selector, denoted by $\tilde \varphi$.

Next, we establish the asymptotic behavior of $\tilde \varphi$.
Since $\Lambda$ is non-degenerate, there exists $(x_0,y_0)\in(0,\infty)^2$ such that $\Lambda(x_0,y_0)>0$. By the homogeneity of tail copulas, we have $\Lambda(x_0,y_0)
=
\sqrt{x_0y_0}\,
\Lambda\left(\sqrt{x_0/y_0},\sqrt{y_0/x_0}\right)$.
Hence, with $b_0:=\sqrt{x_0/y_0}$,
we have $\Lambda(b_0,1/b_0)>0$. By the definition of the tail copula, there exist constants $\delta\in(0,1]$ and $u_0\in(0,1]$ such that $C\left(b_0u,u/b_0\right)\ge \delta u$ for
$0<u\le u_0$.
By decreasing $u_0$ if necessary, we may assume that
$u_0\le \min(b_0,1/b_0)$,
so that $b_0u\in [u^2,1]$ for all $0<u\le u_0$. Since $\tilde\varphi(u)\in\Phi(u)$, we obtain
\begin{align*}
\min\left(\tilde\varphi(u),\frac{u^2}{\tilde\varphi(u)}\right)
\ge
C\left(\tilde\varphi(u),\frac{u^2}{\tilde\varphi(u)}\right)
\ge
C\left(b_0u,\frac{u}{b_0}\right)
\ge
\delta u,
\qquad 0<u\le u_0,
\end{align*}
where the first inequality follows from the Fr\'echet--Hoeffding upper bound. Therefore,
\begin{align*}
\delta u\le \tilde\varphi(u)\le \frac{u}{\delta}
\qquad\text{and}\qquad
\delta u\le \frac{u^2}{\tilde\varphi(u)}\le \frac{u}{\delta},
\qquad 0<u\le u_0.
\end{align*}
Hence $\lim_{u\downarrow 0}\tilde\varphi(u)=0$ and $\lim_{u\downarrow 0}u^2/\tilde\varphi(u)=0$.
Thus $\tilde\varphi\in\mathcal A$. Now let $\varphi\in\mathcal A$ and $u\in(0,1]$.
Since $\varphi(u)\in [u^2,1]=\Gamma(u)$ by admissibility and $\tilde\varphi(u)\in\Phi(u)$, we have
\begin{align*}
C\left(\varphi(u),\frac{u^2}{\varphi(u)}\right)
\le
\max_{x\in\Gamma(u)} C\left(x,\frac{u^2}{x}\right)
=
C\left(\tilde\varphi(u),\frac{u^2}{\tilde\varphi(u)}\right).
\end{align*}
Therefore, $\tilde\varphi$ is a function of maximal dependence. Setting $\varphi^\ast:=\tilde\varphi$, we conclude that $C$ admits a function of maximal dependence.

\item Let $\varphi^\ast\in\mathcal A$ be any function of maximal dependence, whose existence is guaranteed by part~\ref{item:varphi:ast:existence}. For fixed $b\in(0,\infty)$, let $\delta_b:=\min(b,1/b)\in(0,1]$ and define $\varphi_b:(0,1]\to[0,1]$ by
\begin{align*}
\varphi_b(u):=
\begin{cases}
bu, & \text{if } 0<u\le \delta_b,\\
\max(u^2,b\delta_b), & \text{if } \delta_b<u\le 1.
\end{cases}
\end{align*}
It is straightforward to check that $\varphi_b\in\mathcal A$. Since $\varphi_b(u)=bu$ for $0<u\le \delta_b$, the maximality of $\varphi^\ast$ yields
\begin{align*}
\frac{C(\varphi^\ast(u),u^2/\varphi^\ast(u))}{u}
\ge
\frac{C(\varphi_b(u),u^2/\varphi_b(u))}{u}
=
\frac{C(bu,u/b)}{u},
\qquad 0<u\le \delta_b.
\end{align*}
Taking $\liminf_{u\downarrow 0}$, we obtain
\begin{align*}
\liminf_{u\downarrow 0}\frac{C(\varphi^\ast(u),u^2/\varphi^\ast(u))}{u}
\ge
\lim_{u\downarrow 0}\frac{C(bu,u/b)}{u}
=
\Lambda\left(b,\frac{1}{b}\right).
\end{align*}
Since $b>0$ was arbitrary,
\begin{align}
\liminf_{u\downarrow 0}\frac{C(\varphi^\ast(u),u^2/\varphi^\ast(u))}{u}
\ge
\sup_{b\in(0,\infty)}\Lambda\left(b,\frac{1}{b}\right)
=
\lambda^\ast.
\label{eq:liminf:bound}
\end{align}

We now extend the domain of $C$ by $$\bar C(x,y):=
C\bigl((x\vee 0)\wedge 1,(y\vee 0)\wedge 1\bigr),\qquad (x,y)\in\mathbb R^2,$$
which is simply the joint distribution function of $(U,V)\sim C$.
To simplify notation, we write $C$ in place of $\bar C$ below. For $u\in(0,1]$ and $b\in(0,\infty)$, define
\begin{align*}
f_u(b):=\frac{C(ub,u/b)}{u}
\qquad\text{and}\qquad
f(b):=\Lambda\left(b,\frac{1}{b}\right).
\end{align*}
Fix $L>1$ and set $K_L=[1/L,L]$.
Since $\{(b,1/b):b\in K_L\}$ is a compact subset of $(0,\infty)^2$, the local uniformity of the convergence defining the tail copula implies $\sup_{b\in K_L}|f_u(b)-f(b)|\to 0$ as $u\downarrow 0$; see \citet[Theorem~1(v)]{schmidt2006non}. Consequently,
\begin{align}
\lim_{u\downarrow 0}\sup_{b\in K_L}f_u(b)=\sup_{b\in K_L}f(b).
\label{eq:lim:sup:interchange}
\end{align}

Moreover, by the Fr\'echet--Hoeffding upper bound for $C$ and $\Lambda$, we have $C(x,y)\le \min(x,y)$ and $
\Lambda(x,y)\le \min(x,y)$, $(x,y) \in(0,\infty)^2$; see \citet[Theorem~2(i)]{schmidt2006non}. Hence, for every $u\in(0,1]$ and $b>0$, we obtain
$f_u(b)\le \min\left(b,1/b\right)$ and $
f(b)\le \min\left(b,1/b\right)$.
Therefore,
\begin{align}
\sup_{b\notin K_L}f_u(b)\le \frac1L
\qquad\text{and}\qquad
\sup_{b\notin K_L}f(b)\le \frac1L.
\label{eq:lim:sup:interchange:2}
\end{align}
Using
\begin{align*}
\sup_{b\in(0,\infty)}f_u(b)
=
\max\left\{\sup_{b\in K_L}f_u(b),\sup_{b\notin K_L}f_u(b)\right\},
\end{align*}
together with \eqref{eq:lim:sup:interchange} and \eqref{eq:lim:sup:interchange:2}, we obtain
\begin{align}
\limsup_{u\downarrow 0}\sup_{b\in(0,\infty)}f_u(b)
\le
\max\left\{\sup_{b\in K_L}f(b),\frac1L\right\}.
\label{eq:limsup}
\end{align}
On the other hand,
\begin{align}
\liminf_{u\downarrow 0}\sup_{b\in(0,\infty)}f_u(b)
\ge
\liminf_{u\downarrow 0}\sup_{b\in K_L}f_u(b)
=
\sup_{b\in K_L}f(b).
\label{eq:liminf}
\end{align}
Since $K_L\uparrow(0,\infty)$ as $L\to\infty$, we have
$
\sup_{b\in K_L}f(b)\uparrow \sup_{b\in(0,\infty)}f(b).
$
Moreover, by \eqref{eq:lim:sup:interchange:2}, $f(b)\to 0$ as $b\downarrow 0$ or $b\uparrow\infty$. Hence the right-hand side of \eqref{eq:limsup} converges to $\sup_{b\in(0,\infty)}f(b)$ as $L\to\infty$. Combining \eqref{eq:limsup} and \eqref{eq:liminf}, we conclude that
\begin{align}
\lim_{u\downarrow 0}\sup_{b\in(0,\infty)}\frac{C(ub,u/b)}{u}
=
\sup_{b\in(0,\infty)}\Lambda\left(b,\frac{1}{b}\right)
=
\lambda^\ast.
\label{eq:sub:goal}
\end{align}

For each $u\in(0,1]$, let
$
b_u^\ast:=\varphi^\ast(u)/u.
$
Since $\varphi^\ast(u)\in[u^2,1]$, we have $b_u^\ast\in[u,1/u]\subset(0,\infty)$, and therefore
$$
\frac{C(\varphi^\ast(u),u^2/\varphi^\ast(u))}{u}
=
\frac{C(ub_u^\ast,u/b_u^\ast)}{u}
\le
\sup_{b\in(0,\infty)}\frac{C(ub,u/b)}{u}.
$$
Taking $\limsup_{u\downarrow 0}$ and using \eqref{eq:sub:goal}, we obtain
\begin{align}
\limsup_{u\downarrow 0}\frac{C(\varphi^\ast(u),u^2/\varphi^\ast(u))}{u}
\le
\lambda^\ast.
\label{eq:limsup:bound}
\end{align}
Combining \eqref{eq:liminf:bound} and \eqref{eq:limsup:bound}, we find
\begin{align*}
\lambda^\ast
\le
\liminf_{u\downarrow 0}\frac{C(\varphi^\ast(u),u^2/\varphi^\ast(u))}{u}
\le
\limsup_{u\downarrow 0}\frac{C(\varphi^\ast(u),u^2/\varphi^\ast(u))}{u}
\le
\lambda^\ast.
\end{align*}
Hence the limit $\lambda_{\varphi^\ast}(C)$ exists and satisfies
$
\lambda_{\varphi^\ast}(C)=\lambda^\ast(C).
$

\item Assume that the supremum of
$
f(b)=\Lambda\left(b,1/b\right)$
is uniquely attained at $b^\ast\in(0,\infty)$. Let $\varphi^\ast\in\mathcal A$ be a function of maximal dependence and set
$b_u^\ast:=\varphi^\ast(u)/u$, $u\in(0,1]$.
Since $\varphi^\ast(u)\in [u^2,1]$, we have $b_u^\ast\in[u,1/u]$. By the change of variable $t=bu$, the maximality of $\varphi^\ast(u)$ implies that
$b_u^\ast\in \operatorname*{argmax}_{b\in[u,1/u]} f_u(b)$,
$u\in(0,1]$.

Fix $\varepsilon>0$. Since $f$ is continuous on $(0,\infty)$, $f(b^\ast)=\lambda^\ast>0$, and $f(b)\to 0$ as $b\downarrow 0$ or $b\uparrow\infty$, we can choose $L>1$ and $\delta\in(0,f(b^\ast)/2)$ such that
$
b^\ast\in K_L=[1/L,L]$ and
$ 1/L\le f(b^\ast)-2\delta$.
Then, by \eqref{eq:lim:sup:interchange:2},
\begin{align}
\sup_{b\notin K_L}f_u(b)\le \frac1L\le f(b^\ast)-2\delta,
\qquad u\in(0,1].
\label{eq:outside:small}
\end{align}
By the local uniform convergence on $K_L$, there exists $u_{L,\delta}\in(0,1/L)$ such that
\begin{align}
\sup_{b\in K_L}|f_u(b)-f(b)|\le \delta,
\qquad 0<u<u_{L,\delta}.
\label{eq:uniform:compact}
\end{align}

We claim that $b_u^\ast\in K_L$ for every $0<u<u_{L,\delta}$. Indeed, if $b_u^\ast\notin K_L$, then by \eqref{eq:outside:small}, we have
$f_u(b_u^\ast)\le 1/L\le f(b^\ast)-2\delta$.
On the other hand, since $b^\ast\in K_L\subset [u,1/u]$ for $u<1/L$, \eqref{eq:uniform:compact} gives
$
f_u(b^\ast)\ge f(b^\ast)-\delta.
$
Hence
$
f_u(b_u^\ast)\le f(b^\ast)-2\delta < f(b^\ast)-\delta\le f_u(b^\ast),
$
which contradicts the maximality of $b_u^\ast$ over $[u,1/u]$. Thus
$
b_u^\ast\in K_L$ for all $0<u<u_{L,\delta}$.

Now set
\begin{align*}
F_{L,\varepsilon}:=
K_L\cap\{b\in(0,\infty): |b-b^\ast|\ge \varepsilon\}.
\end{align*}
If $F_{L,\varepsilon}=\varnothing$, then $b_u^\ast\in K_L$ already implies
$
|b_u^\ast-b^\ast|<\varepsilon$ for all $0<u<u_{L,\delta}$, and there is nothing more to prove. Assume therefore that $F_{L,\varepsilon}\neq\varnothing$. Since $F_{L,\varepsilon}$ is compact and $f$ has a unique maximizer at $b^\ast$, there exists $\eta>0$ such that
$
\sup_{b\in F_{L,\varepsilon}} f(b)\le f(b^\ast)-3\eta.
$
Again by the local uniform convergence on $K_L$, there exists $u_{L,\eta}\in(0,1/L)$ such that
\begin{align}
\sup_{b\in K_L}|f_u(b)-f(b)|\le \eta,
\qquad 0<u<u_{L,\eta}.
\label{eq:local:unif:KL}
\end{align}
Set
$
u_\varepsilon:=\min(u_{L,\delta},u_{L,\eta})
$
and let $0<u<u_\varepsilon$. We already know that $b_u^\ast\in K_L$. Suppose, toward a contradiction, that
$
|b_u^\ast-b^\ast|\ge \varepsilon.
$
Then $b_u^\ast\in F_{L,\varepsilon}$, and therefore, by the choice of $\eta$ and \eqref{eq:local:unif:KL}, we have
$f_u(b_u^\ast)\le f(b_u^\ast)+\eta\le f(b^\ast)-2\eta.
$
Since $b^\ast\in K_L\subset [u,1/u]$ and \eqref{eq:local:unif:KL} also yields
$
f_u(b^\ast)\ge f(b^\ast)-\eta,
$
we obtain
$
f_u(b_u^\ast)\le f(b^\ast)-2\eta < f(b^\ast)-\eta\le f_u(b^\ast),
$
again contradicting the maximality of $b_u^\ast$ over $[u,1/u]$. Therefore
$
|b_u^\ast-b^\ast|<\varepsilon$ for all $0<u<u_\varepsilon$.
Since $\varepsilon>0$ was arbitrary, we conclude that
$
\lim_{u\downarrow 0}\varphi^\ast(u)/u=b^\ast$.
\qedhere
\end{enumerate}
\end{proof}

\subsection{Lemma~\ref{lem:mtcm:and:h}}

\begin{proof}[Proof of Lemma~\ref{lem:mtcm:and:h}]
\hspace{0mm}
 \begin{enumerate}[label=(\roman*), labelwidth=\widthof{(ii)}]
\item We consider the following form of the MTCM by changing the variable $b=e^s$:
\begin{align*}
    \lambda^\ast(\hat C)=\max_{b \in (0,\infty)}\Lambda\left(b,\frac{1}{b};\hat C\right)
    =\max_{s \in \R}\Lambda\left(e^s,e^{-s};\hat C\right).
\end{align*}
If $H$ admits a density on $(0,1)$, then~\eqref{eq:Lambda:H} yields
 \begin{align*}
  \Lambda(x,y;\hat C)=\int_0^1 \min\{wx,(1-w)y\}h(w) \,{\rm d}w,
  \quad x,y \in (0,\infty).
\end{align*}
Note that point masses of $H$ at $0$ and $1$ are not relevant since $\min(0\cdot x,1\cdot y)=\min(1\cdot x,0\cdot y)=0$ for any $x,y \in (0,\infty)$.

For $w\in(0,1)$, define $a(w)=(1/2)\,\{\ln w - \ln (1-w)\}$.
Then $w=w(a)=e^{2a}/(1+e^{2a})$ and $(1-w)/w=e^{-2a}$, hence ${\rm d}w=2w(1-w)\,{\rm d}a$.
Moreover, we have that
\begin{align*}
  \min\!\left(w e^s,(1-w)e^{-s}\right)=\sqrt{w(1-w)}\,e^{-|s+a|}
\end{align*}
since $we^s=\sqrt{w(1-w)}\,e^{s+a}$ and $(1-w)e^{-s}=\sqrt{w(1-w)}\,e^{-(s+a)}$.
Therefore,
\begin{align*}
  \Lambda(e^s,e^{-s})
  &=\int_0^1 \min\!\left(e^s w,e^{-s}(1-w)\right)h(w)\,dw \\
  &=\int_{-\infty}^{\infty} \sqrt{w(a)(1-w(a))}\,e^{-|s+a|}\,h(w(a))\,2w(a)(1-w(a))\,\rd a \\
  &=\int_{-\infty}^{\infty} e^{-|s+a|}\,m(a)\,\rd a,
\end{align*}
where $m$ is defined by~\eqref{eq:m:function}.
\item For convenience, write the function $L(s)=\int_{-\infty}^\infty e^{-|s+a|}m(a)\,\rd a$.
If $m$ is even, it is straightforward to check that $L$ is also even.
Therefore, if, in addition, $L$ is strictly decreasing on $(0,\infty)$, then $L$ attains its unique maximum at $s=0$ (that is $b^\ast=1)$.

Set $k(x)=e^{-|x|}$ so that $L(s)=\int_{-\infty}^{\infty} k(s+a)m(a)\,\rd a$ for $s>0$.
Since $k$ is globally Lipschitz with Lipschitz constant $1$, for every $h\neq 0$
and every $a\in\R$, we have
\begin{align*}
  \left|
  \frac{k(s+h+a)-k(s+a)}{h}
  \right|
  \le 1.
\end{align*}
Moreover, for every $a\neq -s$, the function $k$ is differentiable at $s+a$, and
\begin{align*}
  \lim_{h\to 0}
  \frac{k(s+h+a)-k(s+a)}{h}
  =
  -\operatorname{sgn}(s+a)e^{-|s+a|}.
\end{align*}
Since the exceptional set $\{-s\}$ has Lebesgue measure zero and $|m|$ is
integrable on $\R$, the dominated convergence theorem yields
\begin{align*}
  L'(s)
  &=
  \int_{-\infty}^{\infty}
  \lim_{h\to 0}
  \frac{k(s+h+a)-k(s+a)}{h}\,m(a)\,\rd a \\
  &=
  -\int_{-\infty}^{\infty}
  \operatorname{sgn}(s+a)e^{-|s+a|}\,m(a)\,\rd a.
\end{align*}
Then, for $s>0$,
\begin{align*}
  L'(s)&=-\int_0^\infty 1\cdot e^{-(s+a)}m(a)\rd a  - \int_0^\infty \operatorname{sgn}(s-a)\,e^{-|s-a|}m(a)\,\rd a \\
  &=-\int_0^s \left\{e^{-(s+a)}+e^{-(s-a)}\right\}m(a)\,\rd a
    -\int_s^\infty \left\{e^{-(s+a)}-e^{-(a-s)}\right\}m(a)\,\rd a \\
  &=-2e^{-s}\int_0^s \cosh(a)\,m(a)\,\rd a
    +2\sinh(s)\int_s^\infty e^{-a}\,m(a)\,\rd a.
\end{align*}
Therefore, by using strict decreasingness of $m$, we have that
\begin{align*}
 L'(s)
  &< -2e^{-s}m(s)\int_0^s \cosh(a)\,\rd a
  + 2\sinh(s)m(s)\int_s^\infty e^{-a}\,\rd a\\
  &=-2e^{-s}m(s)\sinh(s) + 2\sinh(s)m(s)e^{-s}\\
  &=0,
\end{align*}
which completes the proof.\qedhere
\end{enumerate}
\end{proof}

\subsection{Proposition~\ref{prop:bivariate:t:b:1}}

We first derive the density of the spectral measure of the $t$-EV copula.
Denote by $H_{\nu,\rho}$ the spectral measure of the bivariate $t$-EV copula~\citep{demarta2005t,nikoloulopoulos2009extreme} for the correlation parameter $\rho\in (-1,1)$ and the degrees of freedom $\nu>0$.
Below, $T_{\nu+1}$ and $t_{\nu+1}$ denote the cdf and density, respectively, of the univariate
Student $t$ distribution with $\nu+1$ degrees of freedom.
We need the following lemma to prove Proposition~\ref{prop:bivariate:t:b:1}.

\begin{Lemma}[Spectral measure of bivariate $t$-EV copulas]\label{lem:bivariate:t:h}
For the bivariate $t$-EV copula with correlation parameter
$\rho\in(-1,1)$ and degrees of freedom $\nu>0$, define
\begin{align*}
  \eta=\sqrt{\frac{\nu+1}{1-\rho^2}}
  \qquad\text{and}\qquad
  r(w)=\frac{1-w}{w},\qquad w\in(0,1).
\end{align*}
Then the following statements hold.
\begin{enumerate}[label=(\roman*), labelwidth=\widthof{(iii)}]
\item The restriction of $H_{\nu,\rho}$ to $(0,1)$ is absolutely continuous with
Lebesgue density
\begin{align*}
  h_{\nu,\rho}(w)
  =
  \frac{\eta}{\nu}\,
  \frac{
    t_{\nu+1}\bigl(\eta\{r(w)^{1/\nu}-\rho\}\bigr)
  }{
    w^{(2\nu+1)/\nu}(1-w)^{(\nu-1)/\nu}
  },
  \quad w\in(0,1).
\end{align*}
\item The endpoint masses are $H_{\nu,\rho}(\{0\})=H_{\nu,\rho}(\{1\})=T_{\nu+1}(-\eta\rho)$.
\item The density is symmetric, i.e.,\ $h_{\nu,\rho}(w)=h_{\nu,\rho}(1-w)$ for every
$w\in(0,1)$.
\item The interior mass satisfies
\begin{align*}
  \int_0^1 h_{\nu,\rho}(w)\,\mathrm{d}w
  =
  2-H_{\nu,\rho}(\{0\})-H_{\nu,\rho}(\{1\})
  =
  2\,T_{\nu+1}(\eta\rho).
\end{align*}
\end{enumerate}
\end{Lemma}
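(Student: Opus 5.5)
I would start from the known closed form of the stable tail dependence function of the bivariate $t$-EV copula (\citet{demarta2005t}, \citet{nikoloulopoulos2009extreme}):
\begin{align*}
  \ell(x,y)=x\,T_{\nu+1}\bigl(\eta\{(x/y)^{1/\nu}-\rho\}\bigr)+y\,T_{\nu+1}\bigl(\eta\{(y/x)^{1/\nu}-\rho\}\bigr).
\end{align*}
Then I would recover $H_{\nu,\rho}$ from $\ell$ using the standard Pickands-type relations. Writing $A(w)=\ell(1-w,w)$ (or an equivalent parametrization matched to the convention $\ell(x,y)=\int\max\{wx,(1-w)y\}H(\rd w)$), one has $A''(w)\,\rd w$ equal, up to the change of variables induced by that convention, to the interior part of $H$. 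The endpoint masses are read off from one-sided derivatives of $A$ at $0$ and $1$. Concretely, I would use $\partial_x\ell(x,y)=H([t,1])$ with $t=y/(x+y)$, or a similar identity obtained by differentiating the integral representation in $x$. That derivative is
\begin{align*}
  \partial_x\ell(x,y)=\int_{\{wx>(1-w)y\}} w\,H(\rd w),
\end{align*}
a $w$-weighted tail mass of $H$. Differentiating once more in the threshold gives $w\,h(w)$ times a Jacobian.

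For the computation of $\partial_x\ell$, I would expect the usual cancellation. With $z=(x/y)^{1/\nu}$, the terms coming from differentiating inside the $T_{\nu+1}$'s cancel, because $x\,t_{\nu+1}(\eta(z-\rho))\,\eta z/(\nu x)$ must equal $y\,t_{\nu+1}(\eta(1/z-\rho))\,\eta z^{-1}/(\nu x)$. This identity follows from the explicit form of the $t$ density: it is equivalent to $t_{\nu+1}(\eta(z-\rho))\,z^{\nu+1}=t_{\nu+1}(\eta(z^{-1}-\rho))$. I would verify it by expanding $1+\eta^2(z-\rho)^2/(\nu+1)=(z^2-2\rho z+1)/(1-\rho^2)$, which makes both sides match after factoring $z^{-2}$. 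Hence $\partial_x\ell(x,y)=T_{\nu+1}(\eta\{(x/y)^{1/\nu}-\rho\})$.

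Setting the threshold $w_0=y/(x+y)$, so that $x/y=r(w_0)^{-1}$, I would obtain the $w$-weighted mass of $(w_0,1]$. Letting $w_0\downarrow0$ and $w_0\uparrow1$ then gives the endpoint masses, using the normalization $\int w\,H=1$. Differentiating in $w_0$ yields $w_0h(w_0)=\frac{\rd}{\rd w_0}T_{\nu+1}(\cdots)$. Computing $\frac{\rd}{\rd w}r(w)^{1/\nu}=-\frac{1}{\nu}r^{1/\nu-1}w^{-2}$ and simplifying the powers of $w$ and $1-w$ should give the stated density. One point to check carefully is whether the argument should be $r^{1/\nu}$ or $r^{-1/\nu}$, which depends on the orientation; the stated formula fixes it, and I would adjust the convention to match. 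The endpoint mass at $1$ comes out as $T_{\nu+1}(-\eta\rho)$, using the limit of the argument $\to-\eta\rho$ as $r\to0$. The mass at $0$ follows by the $x\leftrightarrow y$ symmetry of $\ell$.

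For part (iii), symmetry of $\ell$ in $(x,y)$ implies that $H$ is invariant under $w\mapsto1-w$, so $h(w)=h(1-w)$. Alternatively, it can be checked directly from the formula via the same density identity as above. Part (iv) follows from the total mass $\int H=\int w\,H+\int(1-w)\,H=2$ minus the two endpoint masses, together with $1-T_{\nu+1}(-\eta\rho)=T_{\nu+1}(\eta\rho)$. The main obstacle I expect is the bookkeeping in part (i), namely the orientation convention and the density identity for $t_{\nu+1}$; the rest is routine.
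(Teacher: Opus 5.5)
Your route is essentially the paper's. You differentiate the closed form in $x$, cancel the two inner-derivative terms with the $t$-density identity, identify $\partial_x\ell=1-\partial_x\Lambda$ with the $w$-weighted mass of $H_{\nu,\rho}$ on one side of the threshold $w_0=y/(x+y)$, and then differentiate in $w_0$. Your way of getting (ii) is slightly slicker than the paper's. You let $w_0\uparrow1$ in $\int_{(w_0,1]}w\,H_{\nu,\rho}(\mathrm{d}w)=T_{\nu+1}(\eta\{r(w_0)^{1/\nu}-\rho\})$, whereas the paper takes the limit of $\Lambda(1,y)$ as $y\to\infty$, which needs a tail bound on $T_{\nu+1}$. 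The limit $w_0\downarrow0$, which you list alongside, gives no information on $H_{\nu,\rho}(\{0\})$ because of the weight $w$; it only reproduces $\int w\,H=1$. So it is your symmetry step that actually delivers the mass at $0$.

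The execution has slips you should fix.

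\emph{Exponent.} The identity behind the cancellation is $t_{\nu+1}(\eta(z^{-1}-\rho))=z^{\nu+2}\,t_{\nu+1}(\eta(z-\rho))$, not $z^{\nu+1}$. Your own cancellation condition reduces to this once you substitute $y/x=z^{-\nu}$. Likewise, pulling $z^{-2}$ out of $1+z^{-2}-2\rho z^{-1}$ and raising to the power $-(\nu+2)/2$ gives $z^{\nu+2}$.

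\emph{Orientation.} With $w_0=y/(x+y)$ one has $x/y=r(w_0)$, not $r(w_0)^{-1}$. The orientation is dictated by the convention $\ell(x,y)=\int\max\{wx,(1-w)y\}\,H(\mathrm{d}w)$, so there is no freedom to ``adjust'' it.

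\emph{Sign.} Since $w_0\mapsto\int_{(w_0,1]}w\,H(\mathrm{d}w)$ is decreasing, $w_0h(w_0)=-\frac{\mathrm{d}}{\mathrm{d}w_0}T_{\nu+1}(\eta\{r(w_0)^{1/\nu}-\rho\})$. With $r'(w)=-w^{-2}$ this yields exactly the stated density, with no convention change needed.

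\emph{Atoms and absolute continuity.} State explicitly, as the paper does, that the left and right $x$-derivatives of $\ell$ are $\int_{(w_0,1]}$ and $\int_{[w_0,1]}$ of $w\,H(\mathrm{d}w)$. Differentiability of the closed form therefore rules out atoms in $(0,1)$. The resulting $C^1$ function of $w_0$ then gives absolute continuity of $H|_{(0,1)}$, and that is what justifies ``differentiating once more''.
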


\begin{proof}
By \citet[Theorem~2.3 with $d=2$]{nikoloulopoulos2009extreme}, the tail copula of the survival $t$-EV copula is
\begin{align}\label{eq:tEV:Lambda:explicit}
  \Lambda(x,y)
  =
  x\,T_{\nu+1}\!\left(\eta\left[\rho-\left(\frac{y}{x}\right)^{-1/\nu}\right]\right)
  +
  y\,T_{\nu+1}\!\left(\eta\left[\rho-\left(\frac{x}{y}\right)^{-1/\nu}\right]\right),
  \quad (x,y)\in(0,\infty)^2.
\end{align}
Moreover, as in Lemma~\ref{lem:mtcm:and:h}, the spectral representation gives
\begin{align}\label{eq:tEV:Lambda:spectral}
  \Lambda(x,y)
  =
  \int_{[0,1]} \min\{wx,(1-w)y\}\,H_{\nu,\rho}(\mathrm{d}w),
  \quad (x,y)\in(0,\infty)^2.
\end{align}

\begin{enumerate}[label=(\roman*), labelwidth=\widthof{(iii)}]
\item
Fix $x,y>0$ and set $a=y/(x+y)\in(0,1)$. Define
\begin{align*}
  G(t)=\int_{[0,t]} w\,H_{\nu,\rho}(\mathrm{d}w),\qquad t\in[0,1].
\end{align*}
For fixed $y>0$ and $w\in[0,1]$, the map $x\mapsto \min\{wx,(1-w)y\}$ is Lipschitz with constant $w$. Hence, for every $h\neq 0$,
\begin{align*}
  \left|
  \frac{
    \min\{w(x+h),(1-w)y\}-\min\{wx,(1-w)y\}
  }{h}
  \right|
  \le w.
\end{align*}
Since $H_{\nu,\rho}$ is finite (indeed, its total mass equals $2$ by the moment
constraints), the dominated convergence theorem applied to
\eqref{eq:tEV:Lambda:spectral} yields the one-sided derivatives
\begin{align*}
  \frac{\partial^+ \Lambda}{\partial x}(x,y)
  &=
  \int_{[0,a)} w\,H_{\nu,\rho}(\mathrm{d}w),\\
  \frac{\partial^- \Lambda}{\partial x}(x,y)
  &=
  \int_{[0,a]} w\,H_{\nu,\rho}(\mathrm{d}w)
  =
  G(a).
\end{align*}
Indeed, the pointwise limit of the difference quotient is $w \id_{\{w<a\}}$ when $h\downarrow 0$, and is $w \id_{\{w\le a\}}$ when $h\uparrow 0$.
On the other hand, we have from \eqref{eq:tEV:Lambda:explicit} that $\Lambda$ is of
class $\operatorname{C}^2$ on $(0,\infty)^2$, so the derivative with respect to
$x$ exists. Therefore, we obtain
\begin{align*}
  a\,H_{\nu,\rho}(\{a\})
  =
  \frac{\partial^- \Lambda}{\partial x}(x,y)
  -
  \frac{\partial^+ \Lambda}{\partial x}(x,y)
  =0.
\end{align*}
Since $a\in(0,1)$ was arbitrary, $H_{\nu,\rho}$ has no atoms in $(0,1)$, and
hence
\begin{align}\label{eq:formula:Lambda:x:G}
\frac{\partial}{\partial x} \Lambda(x,y)=G(a),\quad\text{where}\quad a=\frac{y}{x+y}.
\end{align}

Now fix $x>0$ and define
\begin{align*}
  F_x(t)=\frac{\partial \Lambda}{\partial x}\left(x,\frac{xt}{1-t}\right),
  \qquad t\in(0,1).
\end{align*}
Because $\Lambda$ is of class $\operatorname{C}^2$, the map $F_x$ is of
class $\operatorname{C}^1$. By~\eqref{eq:formula:Lambda:x:G}, we have $F_x(t)=G(t)$ for every
$t\in(0,1)$, so $G$ is differentiable on $(0,1)$. By the chain rule,
\begin{align*}
  G'(t)
  =
  F_x'(t)
  =
  \frac{\partial^2 \Lambda}{\partial x\,\partial y}
  \left(x,\frac{xt}{1-t}\right)\frac{x}{(1-t)^2},
  \qquad t\in(0,1).
\end{align*}
Evaluating this at $x=1-w$ and $t=w$ gives
\begin{align*}
  G'(w)=\frac{1}{1-w}\,\frac{\partial^2 \Lambda}{\partial x\,\partial y}(1-w,w),
  \qquad w\in(0,1).
\end{align*}
Since $G(t)=\mu([0,t])$, $t\in(0,1)$, where
$\mu(\mathrm dw):=w\,H_{\nu,\rho}(\mathrm dw)$, we have that, for every
$0<a<b<1$,
\begin{align*}
\mu((a,b])=G(b)-G(a)=\int_a^b G'(w)\,\mathrm dw.
\end{align*}
Hence $\mu|_{(0,1)}$ is absolutely continuous with respect to Lebesgue
measure with density $G'$. Since $\mu(\mathrm dw):=w\,H_{\nu,\rho}(\mathrm dw)$ and $w>0$ on
$(0,1)$, it follows that $H_{\nu,\rho}|_{(0,1)}$ is also absolutely
continuous. 
Writing $h_{\nu,\rho}$ as the density of $H_{\nu,\rho}$, we obtain $G'(w)=w\,h_{\nu,\rho}(w)$ for a.e. $w\in(0,1)$.
Therefore,
\begin{align}\label{eq:tEV:h:inversion}
  h_{\nu,\rho}(w)
  =
  \frac{G'(w)}{w}
  =
  \frac{1}{w(1-w)}\,\frac{\partial^2 \Lambda}{\partial x\,\partial y}(1-w,w),
  \quad \text{for a.e. }w\in(0,1).
\end{align}
Since the right-hand side is continuous on \((0,1)\), we may take this
continuous version as the density.
Let $x,y>0$ and set $u=(y/x)^{-1/\nu}$ and $v=(x/y)^{-1/\nu}=u^{-1}$. A direct
differentiation of \eqref{eq:tEV:Lambda:explicit} gives
\begin{align*}
  \frac{\partial \Lambda}{\partial x}(x,y)
  =
  T_{\nu+1}\bigl(\eta(\rho-u)\bigr)
  -
  \frac{\eta}{\nu}\,u\,t_{\nu+1}\bigl(\eta(\rho-u)\bigr)
  +
  \frac{\eta}{\nu}\,u^{-(\nu+1)}\,t_{\nu+1}\bigl(\eta(\rho-u^{-1})\bigr).
\end{align*}
We use the identity
\begin{align}\label{eq:tEV:t-identity}
  t_{\nu+1}\bigl(\eta(\rho-u^{-1})\bigr)=u^{\nu+2}\,t_{\nu+1}\bigl(\eta(\rho-u)\bigr),
  \quad u>0,
\end{align}
which follows by direct algebra from the explicit Student $t$ density together
with $\eta^2=(\nu+1)/(1-\rho^2)$. Substituting \eqref{eq:tEV:t-identity} yields
cancellation of the last two terms and thus $(\partial/\partial x) \Lambda(x,y)
  =
  T_{\nu+1}\bigl(\eta(\rho-u)\bigr)$.
Differentiating with respect to $y$ gives
\begin{align*}
  \frac{\partial^2 \Lambda}{\partial x\,\partial y}(x,y)
  =
  t_{\nu+1}\bigl(\eta(\rho-u)\bigr)\,\eta\,\frac{\partial(\rho-u)}{\partial y}
  =
  t_{\nu+1}\bigl(\eta(\rho-u)\bigr)\,\eta\,
  \frac{1}{\nu}\left(\frac{y}{x}\right)^{-(\nu+1)/\nu}\frac{1}{x}.
\end{align*}
Now evaluate at $(x,y)=(1-w,w)$, where $u=r(w)^{1/\nu}$. Using
$t_{\nu+1}(z)=t_{\nu+1}(-z)$ and \eqref{eq:tEV:h:inversion}, we obtain
\begin{align*}
  h_{\nu,\rho}(w)
  =
  \frac{1}{w(1-w)}\,\frac{\partial^2 \Lambda}{\partial x\,\partial y}(1-w,w)
  =
  \frac{\eta}{\nu}\,
  \frac{
    t_{\nu+1}\bigl(\eta\{r(w)^{1/\nu}-\rho\}\bigr)
  }{
    w^{(2\nu+1)/\nu}(1-w)^{(\nu-1)/\nu}
  }.
\end{align*}
\item For each fixed $w\in[0,1]$, the map $y\mapsto \min\{w,(1-w)y\}$ is increasing and
\begin{align*}
  \lim_{y\to\infty}\min(w,(1-w)y)
  =
  \begin{cases}
    w, & w\in[0,1),\\
    0, & w=1.
  \end{cases}
\end{align*}
Hence, by the monotone convergence theorem applied to \eqref{eq:tEV:Lambda:spectral},
\begin{align}\label{eq:tEV:limit:spectral}
  \lim_{y\to\infty}\Lambda(1,y)
  =
  \int_{[0,1)} w\,H_{\nu,\rho}(\mathrm{d}w)
  =
  \int_{[0,1]} w\,H_{\nu,\rho}(\mathrm{d}w)-H_{\nu,\rho}(\{1\})
  =
  1-H_{\nu,\rho}(\{1\}).
\end{align}
On the other hand, from \eqref{eq:tEV:Lambda:explicit}, we have
\begin{align*}
  \Lambda(1,y)
  =
  T_{\nu+1}\Bigl(\eta\bigl[\rho-y^{-1/\nu}\bigr]\Bigr)
  +
  y\,T_{\nu+1}\Bigl(\eta\bigl[\rho-y^{1/\nu}\bigr]\Bigr).
\end{align*}
The first term converges to $T_{\nu+1}(\eta\rho)$ as $y\to\infty$.
For the second term, note that there exists $c>0$ such that
$t_{\nu+1}(z)\le c z^{-(\nu+2)}$ for all $z\ge 1$, hence
\begin{align*}
  T_{\nu+1}(-z)=\int_z^\infty t_{\nu+1}(u)\,\mathrm{d}u
  \le \frac{c}{\nu+1}\,z^{-(\nu+1)},
  \quad z\ge 1.
\end{align*}
With $z=\eta(y^{1/\nu}-\rho)\to\infty$, this yields
$y\,T_{\nu+1}(\eta[\rho-y^{1/\nu}])\to 0$. Therefore,
\begin{align}\label{eq:tEV:limit:explicit}
  \lim_{y\to\infty}\Lambda(1,y)=T_{\nu+1}(\eta\rho).
\end{align}
Comparing \eqref{eq:tEV:limit:spectral} and \eqref{eq:tEV:limit:explicit} gives $H_{\nu,\rho}(\{1\})=1-T_{\nu+1}(\eta\rho)=T_{\nu+1}(-\eta\rho)$.
The identity $H_{\nu,\rho}(\{0\})=T_{\nu+1}(-\eta\rho)$ follows analogously by
considering $\Lambda(x,1)$ as $x\to\infty$.

\item
Fix $w\in(0,1)$ and set $u=r(w)^{1/\nu}$.
Since $r(1-w)=1/r(w)$, we have $r(1-w)^{1/\nu}=1/u$. Using the explicit density from (i),
\begin{align*}
  \frac{h_{\nu,\rho}(1-w)}{h_{\nu,\rho}(w)}
  =
  \frac{
    t_{\nu+1}\bigl(\eta(u^{-1}-\rho)\bigr)
  }{
    t_{\nu+1}\bigl(\eta(u-\rho)\bigr)
  }\,
  \left(\frac{w}{1-w}\right)^{(\nu+2)/\nu}.
\end{align*}
Since $u^\nu=r(w)=(1-w)/w$, we have $(w/(1-w))^{(\nu+2)/\nu}=u^{-(\nu+2)}$.
Combining this with \eqref{eq:tEV:t-identity} yields $h_{\nu,\rho}(1-w)/h_{\nu,\rho}(w)
  =
  u^{\nu+2}\,u^{-(\nu+2)}=1$, hence $h_{\nu,\rho}(1-w)=h_{\nu,\rho}(w)$.

\item
The first equation directly follows from the moment constraints on the spectral measure, and the second equation is an immediate consequence from (ii).
In the following, we show the first equation by direct calculation.

Set $u=r(w)^{1/\nu}=((1-w)/w)^{1/\nu}\in(0,\infty)$. Then
\begin{align*}
  w=\frac{1}{1+u^\nu},
  \quad
  1-w=\frac{u^\nu}{1+u^\nu}
  \quad\text{and}\quad
  \mathrm{d}w=-\frac{\nu u^{\nu-1}}{(1+u^\nu)^2}\,\mathrm{d}u.
\end{align*}
Moreover, a direct simplification gives
\begin{align*}
  \frac{\mathrm{d}w}{w^{(2\nu+1)/\nu}(1-w)^{(\nu-1)/\nu}}
  =
  -\,\nu(1+u^\nu)\,\mathrm{d}u.
\end{align*}
Sending $w\downarrow 0$ and $w\uparrow 1$ yield $ u\uparrow\infty$ and
$u\downarrow 0$, respectively, and we obtain
\begin{align*}
  \int_0^1 h_{\nu,\rho}(w)\,\mathrm{d}w
  &=
  \int_\infty^0
  \frac{\eta}{\nu}\,
  t_{\nu+1}\bigl(\eta(u-\rho)\bigr)\,
  \frac{\mathrm{d}w}{w^{(2\nu+1)/\nu}(1-w)^{(\nu-1)/\nu}} \\
  &=
  \eta\int_0^\infty (1+u^\nu)\,t_{\nu+1}\bigl(\eta(u-\rho)\bigr)\,\mathrm{d}u \\
  &=
  \eta\int_0^\infty t_{\nu+1}\bigl(\eta(u-\rho)\bigr)\,\mathrm{d}u
  +
  \eta\int_0^\infty u^\nu\,t_{\nu+1}\bigl(\eta(u-\rho)\bigr)\,\mathrm{d}u \\
  &=:I_1+I_2.
\end{align*}
For $I_1$, substituting $z=\eta(u-\rho)$ leads to
\begin{align*}
  I_1
  =
  \int_{-\eta\rho}^\infty t_{\nu+1}(z)\,\mathrm{d}z
  =
  1-T_{\nu+1}(-\eta\rho)
  =
  T_{\nu+1}(\eta\rho).
\end{align*}
For $I_2$, rearranging \eqref{eq:tEV:t-identity} as
$u^\nu t_{\nu+1}(\eta(u-\rho))=u^{-2}t_{\nu+1}(\eta(u^{-1}-\rho))$ and substituting
$v=u^{-1}$ yields
\begin{align*}
  I_2
  &=
  \eta\int_0^\infty u^{-2}\,t_{\nu+1}\bigl(\eta(u^{-1}-\rho)\bigr)\,\mathrm{d}u
  =
  \eta\int_\infty^0 t_{\nu+1}\bigl(\eta(v-\rho)\bigr)\,(-\mathrm{d}v)
  =
  I_1.
\end{align*}
Therefore $\int_0^1 h_{\nu,\rho}(w)\,\mathrm{d}w=I_1+I_2=2T_{\nu+1}(\eta\rho)$.\qedhere
\end{enumerate}
\end{proof}

We are now ready to prove Proposition~\ref{prop:bivariate:t:b:1}.

\begin{proof}[Proof of Proposition~\ref{prop:bivariate:t:b:1}]

Since the bivariate $t$-copula is radially symmetric, we have
$\hat C_{\nu,\rho}=C_{\nu,\rho}$ and hence $\lambda^\ast(C_{\nu,\rho})=\lambda^\ast(\hat C_{\nu,\rho})$.
Moreover, because $C_{\nu,\rho}$ is in the domain of attraction of the
corresponding $t$-EV copula, Lemma~\ref{lem:mtcm:and:h} applies with spectral
measure $H_{\nu,\rho}$. Therefore, it suffices to verify that the function $m$
in \eqref{eq:m:function} is even and integrable on $\mathbb{R}$ and is strictly
decreasing on $(0,\infty)$.

We first show that $m$ is even.
Let $w(a)=e^{2a}/(1+e^{2a})$. Then $w(-a)=1-w(a)$ and
$w(-a)\{1-w(-a)\}=w(a)\{1-w(a)\}$. By Lemma~\ref{lem:bivariate:t:h}~(iii) we have that
\begin{align*}
  m(-a)&=2\{w(-a)(1-w(-a))\}^{3/2}\,h_{\nu,\rho}(w(-a))\\
  &=2\{w(a)(1-w(a))\}^{3/2}\,h_{\nu,\rho}(1-w(a))\\
  &=2\{w(a)(1-w(a))\}^{3/2}\,h_{\nu,\rho}(w(a))\\
  &=m(a).
\end{align*}

Next, we show that $m$ is integrable.
Fix $a>0$ and set $w=w(a)$. Then $r(w)=(1-w)/w=e^{-2a}$ and
$r(w)^{1/\nu}=e^{-2a/\nu}$. Using Lemma~\ref{lem:bivariate:t:h}~(i) and the
definition of $m$ in \eqref{eq:m:function}, we obtain
\begin{align*}
  m(a)
  &=
  2\{w(1-w)\}^{3/2}\,h_{\nu,\rho}(w)\\
  &=
  2\{w(1-w)\}^{3/2}\,
  \frac{\eta}{\nu}\,
  \frac{
    t_{\nu+1}\bigl(\eta\{r(w)^{1/\nu}-\rho\}\bigr)
    }{
    w^{(2\nu+1)/\nu}(1-w)^{(\nu-1)/\nu}
  } \\
  &=
  \frac{2\eta}{\nu}\,
  w^{3/2-(2\nu+1)/\nu}\,
  (1-w)^{3/2-(\nu-1)/\nu}\,
  t_{\nu+1}\bigl(\eta\{e^{-2a/\nu}-\rho\}\bigr).
\end{align*}
Since $3/2-(2\nu+1)/\nu=-1/2-1/\nu$ and $3/2-(\nu-1)/\nu=1/2+1/\nu$, we have
\begin{align*}
  w^{3/2-(2\nu+1)/\nu}(1-w)^{3/2-(\nu-1)/\nu}
  =
  \left(\frac{1-w}{w}\right)^{1/2+1/\nu}
  =
  r(w)^{1/2+1/\nu}
  =
  e^{-(1+2/\nu)a}.
\end{align*}
Therefore, for $a>0$,
\begin{align}\label{eq:m:explicit:a>0}
  m(a)
  =
  \frac{2\eta}{\nu}\,e^{-(1+2/\nu)a}\,
  t_{\nu+1}\Bigl(\eta\{e^{-2a/\nu}-\rho\}\Bigr).
\end{align}
Since $t_{\nu+1}$ is bounded on $\mathbb{R}$, we have from~\eqref{eq:m:explicit:a>0} that
$m(a)\le K e^{-(1+2/\nu)a}$ for $a>0$ and some constant $K>0$. Hence
$\int_0^\infty m(a)\,\mathrm{d}a<\infty$.
Since $m$ is an even function, we conclude that $m$ is integrable on
$\mathbb{R}$.

Finally, we show that $m$ is strictly decreasing on $(0,\infty)$.
Fix $a>0$ and set $u(a)=e^{-2a/\nu}\in(0,1)$ and
$x(a)=\eta\{u(a)-\rho\}$. By~\eqref{eq:m:explicit:a>0}, we have 
$\ln m(a)=\mathrm{const}-(1+2/\nu)a+\ln t_{\nu+1}(x(a))$.
Since $u'(a)=-(2/\nu)u(a)$, we have $x'(a)=-(2\eta/\nu)u(a)$.
It also holds that
$(\mathrm{d}/\mathrm{d}x)\ln t_{\nu+1}(x) = -(\nu+2)x /(\nu+1+x^2)$
and thus, for $a>0$, that
\begin{align*}
  \frac{\mathrm{d}}{\mathrm{d}a}\ln m(a)
  &=
  -\left(1+\frac{2}{\nu}\right)
  -(\nu+2)\frac{x(a)}{\nu+1+x(a)^2}\,x'(a) \\
  &=
  -\left(1+\frac{2}{\nu}\right)
  +\frac{2(\nu+2)}{\nu}\,
  \frac{\eta^2 u(a)\{u(a)-\rho\}}{\nu+1+\eta^2\{u(a)-\rho\}^2}.
\end{align*}
Using $\eta^2=(\nu+1)/(1-\rho^2)$, we obtain $$\frac{\eta^2}{\nu+1+\eta^2(u-\rho)^2} = \frac{1}{1+u^2-2\rho u},$$
and hence
\begin{align*}
  \frac{\mathrm{d}}{\mathrm{d}a}\ln m(a)
  =
  -\frac{\nu+2}{\nu}
  +\frac{2(\nu+2)}{\nu}\,
  \frac{u(a)\{u(a)-\rho\}}{1+u(a)^2-2\rho u(a)}
  =
  \frac{\nu+2}{\nu}\,
  \frac{u(a)^2-1}{1+u(a)^2-2\rho u(a)}.
\end{align*}
Since $u(a)\in(0,1)$ for $a>0$, we have $u(a)^2-1<0$. Moreover, $1+u^2-2\rho u=(u-\rho)^2+(1-\rho^2)>0$
for all $u\in\mathbb{R}$ and $\rho\in(-1,1)$. Therefore, $ (\mathrm{d}/\mathrm{d}a)\ln m(a)<0$ for all $a>0$, which completes the proof.
\end{proof}

\subsection{Lemma~\ref{lem:tail:copula:smo}}

\begin{proof}[Proof of Lemma~\ref{lem:tail:copula:smo}]
\noindent For $(x,y)\in(0,\infty)^2$ and $t\downarrow0$, we have
\begin{align*}
  (1-tx)^{1-\alpha}=1-(1-\alpha)tx+o(t)
  \qquad\text{and}\qquad
  (1-ty)^{1-\beta}=1-(1-\beta)ty+o(t).
\end{align*}
Hence
\begin{align*}
  (1-ty)(1-tx)^{1-\alpha}
  &=1-\{y+(1-\alpha)x\}t+o(t),\\
  (1-tx)(1-ty)^{1-\beta}
  &=1-\{x+(1-\beta)y\}t+o(t).
\end{align*}
Using the identity $\hat C_{\alpha,\beta}^{\mathrm{MO}}(u,v)
  =u+v-1+C_{\alpha,\beta}^{\mathrm{MO}}(1-u,1-v)$, we obtain
\begin{align*}
  \Lambda(x,y;\hat C_{\alpha,\beta}^{\mathrm{MO}})
  &=
  \lim_{t\downarrow0}\frac{\hat C_{\alpha,\beta}^{\mathrm{MO}}(tx,ty)}{t}\\
  &=
  \lim_{t\downarrow0}
  \frac{tx+ty-1+C_{\alpha,\beta}^{\mathrm{MO}}(1-tx,1-ty)}{t}\\
  &=
  \lim_{t\downarrow0}
  \frac{tx+ty-1+\min\!\left((1-ty)(1-tx)^{1-\alpha},\,(1-tx)(1-ty)^{1-\beta}\right)}{t}\\
  &=
  x+y-\max\{y+(1-\alpha)x,\ x+(1-\beta)y\}\\
  &=
  \min(\alpha x,\beta y).
\end{align*}
Therefore $\Lambda(b,1/b;\hat C_{\alpha,\beta}^{\mathrm{MO}})
  =\min\!\left(\alpha b,\beta/b\right)$, $b>0$.
This function is strictly increasing on $(0,\sqrt{\beta/\alpha})$ and strictly decreasing on
$(\sqrt{\beta/\alpha},\infty)$. Hence the maximum is uniquely attained at $b^\ast=\sqrt{\beta/\alpha}$, and $\lambda^\ast(\hat C_{\alpha,\beta}^{\mathrm{MO}})=\sqrt{\alpha\beta}$.
\end{proof}

\subsection{Proposition~\ref{prop:smo}}

\begin{proof}[Proof of Proposition~\ref{prop:smo}]\mbox{}
  \begin{enumerate}[label=(\roman*), labelwidth=\widthof{(iii)}]
  \item
    If $u=1$, then $[u^2,1]=\{1\}$, so the unique solution is $x_1^\ast=1$.
    Fix now $u\in(0,1)$. Equation~\eqref{eq:singular:curve} is equivalent to
    $h_u(x)=0$, where
    \begin{align*}
      h_u(x)=\alpha\ln(1-x)-\beta\ln\!\left(1-\frac{u^2}{x}\right),
      \qquad x\in(u^2,1).
    \end{align*}
    Since
    \begin{align*}
      h_u'(x)
      =
      -\frac{\alpha}{1-x}
      -\frac{\beta u^2}{x(x-u^2)}
      <0,
      \qquad x\in(u^2,1),
    \end{align*}
    the function $h_u$ is strictly decreasing. Moreover, $\lim_{x\downarrow u^2} h_u(x)=+\infty$ and $
      \lim_{x\uparrow 1} h_u(x)=-\infty$.
    Therefore, by the intermediate value theorem, there exists a unique
    $x_u^\ast\in(u^2,1)$ satisfying $h_u(x_u^\ast)=0$.

  \item
    We first show that $x_u^\ast\to0$ and $u^2/x_u^\ast\to0$ as $u\downarrow0$.
    Suppose that $\limsup_{u\downarrow0}x_u^\ast>0$. Then there exist
    $\varepsilon>0$ and a sequence $u_n\downarrow0$ such that
    $x_{u_n}^\ast\ge\varepsilon$. Hence
$(1-x_{u_n}^\ast)^\alpha\le (1-\varepsilon)^\alpha<1$,
    while
$      \left(1-u_n^2/x_{u_n}^\ast\right)^\beta\to1$,
contradicting~\eqref{eq:singular:curve}. Since $x_u^\ast\ge u^2>0$, this proves
    $x_u^\ast\to0$.

    Similarly, suppose that $\limsup_{u\downarrow0}(u^2/x_u^\ast)>0$. Then there exist
    $\varepsilon>0$ and a sequence $u_n\downarrow0$ such that
    $u_n^2/x_{u_n}^\ast\ge\varepsilon$. Then $\left(1-u_n^2/x_{u_n}^\ast\right)^\beta\le (1-\varepsilon)^\beta<1$,
    while
  $ (1-x_{u_n}^\ast)^\alpha\to1,
    $
    again contradicting~\eqref{eq:singular:curve}. Hence $u^2/x_u^\ast\to0$.

    Next define, for $p>0$,
    \begin{align*}
      g_p(z):=\frac{1-(1-z)^p}{z},\qquad z\in(0,1).
    \end{align*}
    Then $g_p(z)\to p$ as $z\downarrow0$. Since $x_u^\ast$ satisfies
    \eqref{eq:singular:curve}, we have
    $
      1-(1-x_u^\ast)^\alpha
      =
      1-\left(1-u^2/x_u^\ast\right)^\beta,
    $
    and therefore
    \begin{align*}
      g_\alpha(x_u^\ast)\,x_u^\ast
      =
      g_\beta\!\left(\frac{u^2}{x_u^\ast}\right)\frac{u^2}{x_u^\ast}.
    \end{align*}
    By rearranging this equation, we obtain
    \begin{align*}
      \left(\frac{x_u^\ast}{u}\right)^2
      =
      \frac{
        g_\beta\!\left(u^2/x_u^\ast\right)
      }{
        g_\alpha(x_u^\ast)
      }.
    \end{align*}
    Since $x_u^\ast\to0$ and $u^2/x_u^\ast\to0$, we conclude that
    $\lim_{u\downarrow0}\left(x_u^\ast/u\right)^2=\beta/\alpha$, that is
    $x_u^\ast \simeq u\sqrt{\beta/\alpha}$.

    Finally, by Theorem~\ref{thm:equivalence}~\ref{item:varphi:ast:asymptotic}
    and Lemma~\ref{lem:tail:copula:smo},
    any function of maximal dependence $\varphi^\ast$ satisfies
    $\varphi^\ast(u)/u\to\sqrt{\beta/\alpha}$ as $u\downarrow0$.
    Hence $\varphi^\ast(u)\simeq x_u^\ast \simeq u\sqrt{\beta/\alpha}$ as $u\downarrow0$. \qedhere
  \end{enumerate}
\end{proof}

\end{document}